\documentclass[12pt]{article}
\usepackage[margin=1in]{geometry} 
\usepackage{amsfonts}
\usepackage{mathrsfs}
\usepackage{amssymb}
\usepackage{booktabs}
\usepackage{tabularx}
\usepackage{hyperref}
\hypersetup{hidelinks}
\usepackage{caption}
\usepackage[linesnumbered,ruled,vlined]{algorithm2e}
\usepackage{natbib}
\usepackage{bm}
\usepackage{amsmath}
\usepackage{amsthm}
\usepackage{multirow}
\usepackage{cases}
\usepackage{graphicx}
\usepackage{color}
\usepackage{rotating}
\usepackage{tabularray}
\usepackage{subcaption}
\usepackage{ragged2e}  
\usepackage{float}
\newcommand{\argmin}{\operatorname*{argmin}}
\UseTblrLibrary{booktabs}
\newtheorem{thm}{\textbf{Theorem}}
\newtheorem{lem}{\textbf{Lemma}}
\newtheorem{Def}{\textbf{Definition}}
\newtheorem{remark}{\textbf{Remark}}
\newtheorem{exam}{\textbf{Example}}
\usepackage{caption}  
\begin{document}

\title{\bf\Large  Parameter estimation and application in two types of uncertain single-index models}

\author{\normalsize Fuguo Wang, Zhiming Li\footnote{Corresponding author: zmli@xju.edu.cn}\\
{\em { \small College of Mathematics and System Science, Xinjiang University,  Urumqi 830046, China}}}
\date{}
\maketitle

\begin{flushleft}
{\bf Abstract} Uncertain data often arises in complex environments because of frequency instability and subjective judgment. This paper establishes two types of uncertain single-index models to capture the inherent properties of such data. Based on the semiparametric least-squares principle,  the Nadaraya–Watson kernel and B-spline methods are used to estimate the unknown coefficients in various scenarios with both crisp and imprecise explanatory variables. Residual analysis and hypothesis testing under uncertainty assess the fit of the proposed models. Furthermore, simulation studies verify the models' validity, and a real-data application demonstrates their effectiveness in practical settings.
\end{flushleft}
\begin{flushleft}
{\bf Keywords} Uncertain single-index model; estimation methods; 
uncertain hypothesis
\end{flushleft}

\noindent \hrulefill

\section{Introduction}
Single-index models are an important class of semiparametric models that mitigate the curse of dimensionality in nonparametric estimation by projecting multivariate covariates onto a univariate index, while retaining the interpretability of parametric models. Due to their strong performance in modeling high-dimensional, complex data, single-index models have been widely applied in biomedicine (Zhang et al., \citeyear{Zhang2023}), the environment (Zhang et al., \citeyear{Zhang2025}), and finance (Zhu et al., \citeyear{Zhu2026}). To date, outstanding theoretical results have been achieved for single-index models, especially in coefficient estimation and testing.  In terms of estimation, Stoker (\citeyear{Stoker1986}) pioneered the average derivative estimator.  Ichimura (\citeyear{Ichimura1987}) extended the least squares method to a semiparametric framework, and further developed the weighted semiparametric least squares estimator to address heteroscedasticity (Ichimura, \citeyear{Ichimura1993}). Since then, numerous alternative estimation approaches have been successively developed in the literature, including the semiparametric maximum likelihood estimator (Ai, \citeyear{Ai1997}), the partial least squares estimator (Naik and Tsai,  \citeyear{NaikTsai2000}), Bayesian estimation (Antoniadis et al., \citeyear{Antoniadis2004}), spline estimation (Wang and Yang, \citeyear{WangYang2009}), the estimating functions method (EFM) (Cui et al., \citeyear{Cui2011}),  and  Lasso estimator (Zhu et al., \citeyear{Zhu2026}). With the enrichment of estimation methods, the development of testing theories has greatly contributed to the maturity of the statistical inference procedure for single-index models, such as the goodness-of-fit test (Xia et al., \citeyear{Xia2004test}; Stute and Zhu, \citeyear{StuteZhu2005}),  two-sample test (Zhang et al., \citeyear{zhang2018constructive}), coefficient significance test (Cai et al., \citeyear{cai2023test}), joint hypothesis test for model parameters with both parametric and nonparametric components (Zhou et al., \citeyear{Zhou2025}).

 Within the probabilistic framework, single-index models are well-suited and relatively flexible for crisp data of both numerical and continuous types. However, imprecision is often observed in practical settings due to frequency instability and expert ratings/judgments. Single-index models, based on probability measures, may lead to information loss, estimation bias, and erroneous statistical inferences when handling imprecise data. To tackle the problem of uncertain data, Liu (\citeyear{Liu2007}) founded the uncertainty theory based on uncertainty measure with normality, duality, subadditivity, and product axioms, which was further refined by Liu (\citeyear{Liu2010}). The novel measure can be used to evaluate imprecise interval observations by accounting for uncertain variables and their distributions. Thus, it serves as a powerful tool for the classification (Liang and Liu, \citeyear{Liang2025}) and prediction tasks with imprecise data. Recent progress in uncertainty theory is discussed by Liu (\citeyear{Liu2026}).  As a vital branch of uncertain statistics in uncertainty theory, uncertain regression analysis was initiated by Yao and Liu (\citeyear{YaoLiu2018}) and serves to characterize the relationships among variables within uncertain environments. Lio and Liu (\citeyear{LioLiu2018}) established the method of moments to determine the disturbance term. To alleviate the sensitivity of estimation to outliers, Liu and Yang (\citeyear{LiuYang2020}) put forward the least absolute deviation estimator.  Lio and Liu (\citeyear{LioLiu2020})  proposed the uncertain maximum likelihood estimation, which was extended by Liu and Liu (\citeyear{LiuLiu2024}) to improve robustness against outliers. Moreover, advances in uncertainty testing techniques have further consolidated the theoretical framework of uncertain statistics. In the context of hypothesis testing, Ye and Liu (\citeyear{YeLiu2022}) developed uncertain hypothesis tests and applied them to test the disturbance term and assess model adequacy in uncertain regression. Subsequently,  Ye and Liu (\citeyear{YeLiu2023}) proposed uncertain significance tests to examine whether regression coefficients are statistically zero. We note that early studies mainly focused on parametric regression models, forming a relatively complete parametric modeling system.  With advances in research, nonparametric uncertain regression has been gradually developed. For instance, Ding and Zhang (\citeyear{DingZhang2021}) adopted B-spline and local polynomial smoothing methods to approximate unknown nonlinear functions, providing a feasible solution for nonparametric modeling in uncertain scenarios. Building on previous studies, Zhang and Li (\citeyear{ZhangLi2026}) extended to semiparametric models and established a statistical inference procedure for uncertain semi-varying coefficient models. Although many research frameworks cover both parametric and nonparametric models with uncertainty, systematic research on single-index models within the uncertainty theory remains limited, despite their unique advantages in dimensionality reduction and interpretability. 
 
 Based on the above analysis, we will propose uncertain single-index models within the framework of uncertainty theory and further conduct the estimation methods and testing procedures. The main contributions are summarized as follows:
 
  (i) Two types of uncertain single-index models are established to achieve greater flexibility in modeling nonlinear relationships and provide a novel and powerful tool for the nonlinear modeling of high-dimensional imprecise data.
  
  (ii) Parameter estimation methods are proposed for different types of single-index models. Particularly for cases where explanatory variables take imprecise data, the monotonicity constraints to be satisfied by the approximate function are specified, and the explicit form of the inverse distribution of explanatory variables is derived. The hypothesis testing procedures are conducted for residual analysis. 
  
  (iii) Numerical simulations are presented to illustrate the performance of single-index models. As an application, Lagos weather datasets are used to verify the effectiveness of the proposed model and methods.

  The rest of the paper is organized as follows. Section \ref{sec2} first reviews the basic concepts of uncertain theory and then introduces the definitions of uncertain single-index models. In Section \ref{sec3}, we establish uncertain single-index models with crisp and uncertain explanatory variables and conduct the estimation method and residual analysis. Numerical simulation in Section \ref{sec5} shows the performance of the proposed methods in the two models above-mentioned. The proposed model is applied to a weather dataset in Section \ref{sec6}. Finally, a brief conclusion is given in Section \ref{sec7}.

\section{Uncertain single-index models}\label{sec2}

In this section, we first review basic concepts of uncertainty theory proposed by Liu (\citeyear{Liu2026}).  Let $\Gamma$ be a nonempty set, and $\mathcal{L}$ a $\sigma$-algebra over $\Gamma$. Each element $\Lambda$ in $\mathcal{L}$ is called an event. A set map $\mathcal{M}$: $\mathcal{L}\rightarrow[0,1]$ is an uncertain measure, following four axioms:
	
{\bf Axiom 1} (Normality Axiom) $\mathcal{M}\{\Gamma\}=1$ for the universal set $\Gamma$.
	
{\bf Axiom 2} (Duality Axiom) $\mathcal{M}\{\Lambda\}+\mathcal{M}\{\Lambda^c\}=1$ for any event
$\Lambda$.

{\bf Axiom 3} (Subadditivity Axiom) For every countable sequence of events $\Lambda_1,\Lambda_2,...,$
$$
\mathcal{M}\Big\{\bigcup_{i=1}^\infty\Lambda_i\Big\}\leq\sum_{i=1}^\infty\mathcal{M}\{\Lambda_i\}.
$$

{\bf Axiom 4} (Product Axiom) Let $(\Gamma_i,\mathcal{L}_i,\mathcal{M}_i)$ be uncertainty spaces for $i=1, 2,...,$ the product uncertain measure $\mathcal{M}$ is an uncertain measure satisfying
$$\mathcal{M}\Big\{\prod_{i=1}^\infty\Lambda_i\Big\}=\bigwedge_{i=1}^\infty\mathcal{M}_i\{\Lambda_i\}.$$
 
An uncertain variable is a function $\xi$ from an uncertainty space $(\Gamma, \mathcal{L}, \mathcal{M})$ to the set of real numbers $\mathbb{R}$, such that for any Borel set B $\in$ $\mathbb{R}$, the set $\{\xi \in B\} = \{\gamma \in \Gamma \mid \xi(\gamma) \in B\}$ is an event in the uncertainty space. The uncertainty distribution $\Phi: \mathbb{R} \to [0, 1]$ of $\xi$ is defined by $\Phi(x) = \mathcal{M}\{\xi \leq x\}$ for any real number $x \in \mathbb{R}$.  An uncertainty distribution $\Phi(x)$ is said to be regular if it is a continuous and strictly increasing function with respect to $x$ at which $0 < \Phi(x) < 1$, and $\lim_{x \to -\infty} \Phi(x) = 0, \lim_{x \to +\infty} \Phi(x) = 1.$
 If the uncertainty distribution $\Phi(x)$ is regular, then the inverse function $\Phi^{-1}(\alpha)$ is called the inverse uncertainty distribution of $\xi$.
The expectation and variance of the variable $\xi$ are defined as
$$E[\xi] = \int_0^1 \Phi^{-1}(\alpha) \, \mathrm{d}\alpha, \quad V[\xi] = \int_0^1 (\Phi^{-1}(\alpha)-E[\xi])^2 \, \mathrm{d}\alpha.$$

An uncertain variable $\xi$ is called linear if it has a linear uncertainty distribution
\[
\Phi(x) = 
\begin{cases}
0, & \text{if } x \leq a, \\[6pt]
\dfrac{x - a}{b - a}, & \text{if } a < x \leq b, \\[6pt]
1, & \text{if } b < x,
\end{cases}
\]
denoted by $\mathcal{L}(a,b)$, where $a$ and $b$ are real numbers with $a < b$.
The inverse uncertainty distribution is $\Phi^{-1}(\alpha) = (1 - \alpha)a + \alpha b.$

An uncertain variable $\xi$ is called normal if it has a normal uncertainty distribution
\[
\Phi(x) = \left(1 + \exp\left(\frac{\pi(e - x)}{\sqrt{3}\sigma}\right)\right)^{-1}, \quad x \in \mathbb{R}, 
\]
denoted by $\mathcal{N}(e, \sigma)$, where $e$ and $\sigma$ are real numbers with $\sigma > 0$. A normal uncertainty distribution is called standard if $e = 0$ and $\sigma = 1$. The inverse uncertainty distribution of normal uncertain variable $\mathcal{N}(e, \sigma)$ is
\[
\Phi^{-1}(\alpha) = e + \frac{\sqrt{3}\sigma}{\pi} \ln \frac{\alpha}{1 - \alpha}, \quad \alpha \in (0,1). 
\]

\begin{lem}
\label{lem21}{\rm(Liu, \citeyear{Liu2026})}
{\rm
Let $\xi_1, \dots, \xi_n$ be uncertain variables, and $f$ be a real-valued measurable function. Then, $f(\xi_1, \dots, \xi_n)$ is an uncertain variable.
}
\end{lem}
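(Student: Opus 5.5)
The plan is to reduce the statement to the definition of an uncertain variable given above and then use the composition of measurable maps. Write $\xi=(\xi_1,\dots,\xi_n)$ for the map $\Gamma\to\mathbb{R}^n$, $\gamma\mapsto(\xi_1(\gamma),\dots,\xi_n(\gamma))$, and let $f:\mathbb{R}^n\to\mathbb{R}$ be Borel measurable. By definition we must show that for every Borel set $B\subset\mathbb{R}$ the set
\[
\{f(\xi_1,\dots,\xi_n)\in B\}=\{\gamma\in\Gamma \mid (\xi_1(\gamma),\dots,\xi_n(\gamma))\in f^{-1}(B)\}
\]
lies in $\mathcal{L}$. Since $f$ is measurable, $C=f^{-1}(B)$ is a Borel subset of $\mathbb{R}^n$. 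It therefore suffices to prove the key claim
\[
\{\xi\in C\}\in\mathcal{L}\quad\text{for every Borel set } C\subset\mathbb{R}^n .
\]
No property of $\mathcal{M}$ is needed for this; only the $\sigma$-algebra $\mathcal{L}$ matters.

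To prove the claim, I will use a standard good-sets argument. Let
\[
\mathcal{G}=\{C\subset\mathbb{R}^n \mid \{\xi\in C\}\in\mathcal{L}\}.
\]
Preimages commute with complements and countable unions, and $\mathcal{L}$ is a $\sigma$-algebra, so $\mathcal{G}$ is a $\sigma$-algebra on $\mathbb{R}^n$. Next, take any product of Borel sets $C=B_1\times\cdots\times B_n$. Then
\[
\{\xi\in C\}=\bigcap_{i=1}^n\{\xi_i\in B_i\}.
\]
Each $\{\xi_i\in B_i\}$ is in $\mathcal{L}$ because each $\xi_i$ is an uncertain variable, and $\mathcal{L}$ is closed under finite intersections. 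Hence all such rectangles belong to $\mathcal{G}$.

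The Borel $\sigma$-algebra of $\mathbb{R}^n$ is generated by these rectangles; open boxes already suffice, since every open set in $\mathbb{R}^n$ is a countable union of open boxes. Therefore $\mathcal{G}$ contains all Borel sets of $\mathbb{R}^n$, which proves the key claim and hence the lemma.

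The only step that needs care is the generation fact: the Borel sets of $\mathbb{R}^n$ coincide with the $\sigma$-algebra generated by rectangles. It follows from second countability of $\mathbb{R}^n$, and I will cite it as standard. One remark on setting: the definition above assumes all $\xi_i$ live on the same uncertainty space $(\Gamma,\mathcal{L},\mathcal{M})$. If they are instead defined on different spaces, first pass to the product uncertainty space given by Axiom 4, where each $\xi_i$ is regarded as a function of its own coordinate, and then run the same argument there.
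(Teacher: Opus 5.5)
Your proof is correct and is essentially the standard argument. The paper gives no proof of its own and simply cites Liu, whose proof is precisely this composition of measurable maps: $(\xi_1,\dots,\xi_n)$ is measurable into $\mathbb{R}^n$, and composing with a Borel measurable $f$ keeps it measurable. Your good-sets argument via rectangles supplies the detail Liu leaves implicit. You are also right to read ``measurable'' as Borel measurable: for a merely Lebesgue measurable $f$, the set $f^{-1}(B)$ need not be Borel, and the conclusion can fail.
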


\begin{lem}
\label{lem22}{\rm(Liu, \citeyear{Liu2026})}
{\rm
Let $\xi_1, \dots, \xi_n$ be independent uncertain variables with regular uncertainty distributions $\Phi_1, \dots, \Phi_n$, respectively. If the function $f$ is strictly increasing with respect to $\xi_1, \dots, \xi_m$ and strictly decreasing with respect to $\xi_{m+1}, \xi_{m+2}, \dots, \xi_n$, then the inverse uncertainty distribution of the uncertain variable $\xi = f(\xi_1,\dots, \xi_n)$ is
\[
\Psi^{-1}(\alpha) = f\bigl(\Phi_1^{-1}(\alpha), \dots, \Phi_m^{-1}(\alpha), \Phi_{m+1}^{-1}(1-\alpha), \dots, \Phi_n^{-1}(1-\alpha)\bigr).
\]
}
\end{lem}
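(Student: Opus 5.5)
The plan is to show that the uncertainty distribution $\Psi$ of $\xi=f(\xi_1,\dots,\xi_n)$ satisfies
\[
\Psi\bigl(f(\Phi_1^{-1}(\alpha),\dots,\Phi_m^{-1}(\alpha),\Phi_{m+1}^{-1}(1-\alpha),\dots,\Phi_n^{-1}(1-\alpha))\bigr)=\alpha
\]
for every $\alpha\in(0,1)$. Lemma \ref{lem21} guarantees that $\xi$ is an uncertain variable, so $\Psi$ is defined. Fix $\alpha$ and write $x_\alpha$ for the value of $f$ above. The proof sandwiches the event $\{\xi\le x_\alpha\}$ between two product-type events, and then uses the product axiom together with duality.

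\textbf{Lower bound.} Consider the rectangle
\[
\Lambda=\bigcap_{i\le m}\{\xi_i\le\Phi_i^{-1}(\alpha)\}\cap\bigcap_{i>m}\{\xi_i\ge\Phi_i^{-1}(1-\alpha)\}.
\]
By the monotonicity of $f$ (increasing in the first $m$ arguments, decreasing in the rest), $\Lambda\subseteq\{\xi\le x_\alpha\}$. Independence, i.e.\ the product axiom (Axiom 4), gives $\mathcal{M}\{\Lambda\}=\bigwedge_i\mathcal{M}\{\cdot\}$. Each factor equals $\alpha$:
\begin{itemize}
\item for $i\le m$ this is the definition of $\Phi_i$;
\item for $i>m$ we have $\mathcal{M}\{\xi_i\ge\Phi_i^{-1}(1-\alpha)\}=1-\Phi_i(\Phi_i^{-1}(1-\alpha))=\alpha$, using duality and continuity of $\Phi_i$ (regularity makes $\mathcal{M}\{\xi_i=c\}=0$).
\end{itemize}
Monotonicity of the measure, which follows from subadditivity and duality, then yields $\Psi(x_\alpha)\ge\alpha$.

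\textbf{Upper bound.} Consider the complementary rectangle
\[
\bigcap_{i\le m}\{\xi_i>\Phi_i^{-1}(\alpha)\}\cap\bigcap_{i>m}\{\xi_i<\Phi_i^{-1}(1-\alpha)\}.
\]
By strict monotonicity, this event is contained in $\{\xi>x_\alpha\}$. The same computation shows that its measure is $1-\alpha$. Hence $\mathcal{M}\{\xi>x_\alpha\}\ge1-\alpha$, and by duality $\Psi(x_\alpha)\le\alpha$. Combining the two bounds gives $\Psi(x_\alpha)=\alpha$.

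\textbf{Identifying the inverse.} Finally, $\alpha\mapsto x_\alpha$ is strictly increasing and continuous, being a composition of monotone continuous maps with matching directions (assuming $f$ is continuous, as is implicit in the lemma). Therefore $\Psi$ is regular on the relevant range, and its inverse is exactly $\alpha\mapsto x_\alpha$, which is the claimed formula.

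\textbf{Main obstacle.} The delicate point is that the product axiom is stated only for measurable rectangles. The events $\{\xi\le x_\alpha\}$ are not rectangles, so the argument must go through the inclusion and monotonicity steps rather than evaluating the measure of $\{\xi\le x_\alpha\}$ directly. Care is also needed with strict versus non-strict inequalities in the two rectangles, which is where continuity of the $\Phi_i$ is used.
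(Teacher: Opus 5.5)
The paper does not prove this lemma; it cites Liu (2026). Your argument is the standard proof from Liu's theory. The lower bound comes from the rectangle $\Lambda\subseteq\{\xi\le x_\alpha\}$, and the upper bound from the opposite rectangle. Liu usually phrases the second step as $\{\xi\le x_\alpha\}\subseteq\bigcup_{i\le m}\{\xi_i\le\Phi_i^{-1}(\alpha)\}\cup\bigcup_{i>m}\{\xi_i\ge\Phi_i^{-1}(1-\alpha)\}$, combined with the maximum property of independent variables; that is just the dual of your complementary-rectangle step.

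Two small wording points:
\begin{itemize}
\item The formula $\mathcal{M}\{\Lambda\}=\bigwedge_i\mathcal{M}\{\cdot\}$ is the \emph{definition} of independence of $\xi_1,\dots,\xi_n$, not Axiom 4 itself.
\item You are right that $f$ must be continuous for $\Psi$ to be regular. The lemma as quoted in the paper leaves this implicit.
\end{itemize}

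The one real flaw is your parenthetical claim that regularity makes $\mathcal{M}\{\xi_i=c\}=0$. This is false in uncertainty theory: uncertain measures are only subadditive, and a singleton can carry positive measure under a regular distribution.

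Here is a counterexample. On $\Gamma=\mathbb{R}$, take $\rho$ continuous, rising strictly from $0$ (at $-\infty$) to $\rho(0)=0.5$ and then falling strictly back to $0$. Define
\begin{itemize}
\item $\mathcal{M}\{\Lambda\}=\sup_\Lambda\rho$ if this is $<0.5$;
\item $\mathcal{M}\{\Lambda\}=1-\sup_{\Lambda^c}\rho$ if that is $<0.5$;
\item $\mathcal{M}\{\Lambda\}=0.5$ otherwise.
\end{itemize}
This satisfies Axioms 1--3. The variable $\xi(\gamma)=\gamma$ then has the regular distribution $\Phi=\rho$ on $(-\infty,0)$, $\Phi(0)=0.5$, $\Phi=1-\rho$ on $(0,\infty)$. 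Yet $\mathcal{M}\{\xi=c\}=\rho(c)>0$.

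The identities you actually need are still true, and they follow from monotonicity of $\mathcal{M}$ alone:
\begin{itemize}
\item For $\delta>0$, the inclusions $\{\xi_i\le c-\delta\}\subseteq\{\xi_i<c\}\subseteq\{\xi_i\le c\}$ give $\Phi_i(c-\delta)\le\mathcal{M}\{\xi_i<c\}\le\Phi_i(c)$.
\item Continuity of $\Phi_i$ then gives $\mathcal{M}\{\xi_i<c\}=\Phi_i(c)$.
\item Duality then gives $\mathcal{M}\{\xi_i\ge c\}=1-\Phi_i(c)$.
\end{itemize}
This supplies the factor $\alpha$ for $\{\xi_i\ge\Phi_i^{-1}(1-\alpha)\}$ in your lower bound. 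It also supplies the factor $1-\alpha$ for $\{\xi_i<\Phi_i^{-1}(1-\alpha)\}$ in your upper bound. With this replacement your proof is complete.
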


\begin{lem}
\label{lem23}{\rm(Liu, \citeyear{Liu2026})}
{\rm
Let $\xi$ be an uncertain variable with regular uncertainty distribution $\Phi$. Then, the $k$th moment of the variable $\xi$ can be calculated as
$E[\xi^k] = \int_0^1 \left( \Phi^{-1}(\alpha) \right)^k \, \mathrm{d}\alpha.$
}
\end{lem}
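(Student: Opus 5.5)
The statement to prove is Lemma \ref{lem23}: for regular $\Phi$, $E[\xi^k]=\int_0^1(\Phi^{-1}(\alpha))^k\,\mathrm{d}\alpha$. I would reduce it to the expectation formula $E[\eta]=\int_0^1\Psi^{-1}(\alpha)\,\mathrm{d}\alpha$, which holds for any uncertain variable $\eta$ with a regular uncertainty distribution $\Psi$. Here $\eta=\xi^k$ is an uncertain variable by Lemma \ref{lem21}, since $x\mapsto x^k$ is measurable. Only one task remains: identify the inverse uncertainty distribution of $\xi^k$ in terms of $\Phi^{-1}$, or at least show that the expectation integral can be rewritten through $\Phi^{-1}$.

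\textbf{Monotone case.} Suppose first that $f(x)=x^k$ is strictly increasing on the range of $\xi$. This holds for odd $k$, and for even $k$ whenever $\xi\ge 0$. Applying Lemma \ref{lem22} with $n=m=1$ gives the inverse uncertainty distribution of $\xi^k$ as $\Psi^{-1}(\alpha)=(\Phi^{-1}(\alpha))^k$. Integrating over $\alpha\in(0,1)$ yields the claim directly.

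\textbf{General case (even $k$).} Here $x^k$ is not monotone, so I would avoid inverse distributions and work from the definition
\[
E[\eta]=\int_0^{+\infty}\mathcal{M}\{\eta\ge r\}\,\mathrm{d}r-\int_{-\infty}^0\mathcal{M}\{\eta\le r\}\,\mathrm{d}r .
\]
Since $\eta=\xi^k\ge 0$, the second term vanishes. The key step is to show $\mathcal{M}\{\xi^k\ge r\}$ equals the Lebesgue measure of $\{\alpha:(\Phi^{-1}(\alpha))^k\ge r\}$, namely $\mathcal{M}\{\xi\ge r^{1/k}\}+\mathcal{M}\{\xi\le -r^{1/k}\}$. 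With that in hand, Fubini turns $\int_0^\infty \lambda\{\alpha:(\Phi^{-1}(\alpha))^k\ge r\}\,\mathrm{d}r$ into $\int_0^1(\Phi^{-1}(\alpha))^k\,\mathrm{d}\alpha$.

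\textbf{Main obstacle.} The required identity needs the measure of the union $\{\xi\ge s\}\cup\{\xi\le -s\}$ to be additive. Subadditivity (Axiom 3) gives only an upper bound, and uncertain measure is not additive in general. I would first try to get the lower bound from duality (Axiom 2) together with the regularity of $\Phi$. If that fails, I would fall back on the operational law for non-monotone functions of a single uncertain variable: the distribution of $\xi^k$ is determined by $\Phi$ through $\Psi^{-1}(\alpha)$ equal to a rearrangement of $(\Phi^{-1}(\cdot))^k$. Since the integral of a function over $(0,1)$ is invariant under equimeasurable rearrangement, the formula would still follow.
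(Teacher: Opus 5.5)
The paper gives no proof of this lemma; it is quoted from Liu (2026), so your argument has to stand on its own. Your monotone case is complete. For odd $k$, and for $\xi\ge 0$ (write $\xi^k=f(\xi)$ with the strictly increasing $f(x)=x|x|^{k-1}$), Lemma~\ref{lem22} gives the regular inverse distribution $(\Phi^{-1}(\alpha))^k$, and integrating yields the formula.

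The even case with sign-changing $\xi$ has a gap that cannot be closed from the axioms, and neither of your fallbacks works.

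\begin{itemize}
\item The identity you need, $\mathcal{M}\{(\xi\ge s)\cup(\xi\le -s)\}=1-\Phi(s)+\Phi(-s)$, is an additivity statement. Duality, monotonicity and continuity of $\Phi$ only give $\max\{1-\Phi(s),\Phi(-s)\}\le\mathcal{M}\{|\xi|\ge s\}\le 1-\Phi(s)+\Phi(-s)$, because $\Phi$ fixes $\mathcal{M}$ only on the events $\{\xi\le x\}$ and their complements.
\item The rearrangement fallback assumes the same thing. Saying the inverse distribution of $\xi^k$ is a rearrangement of $(\Phi^{-1})^k$ amounts to $\mathcal{M}\{\xi^k\le y\}=\Phi(y^{1/k})-\Phi(-y^{1/k})$. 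No operational law for non-monotone $f$ delivers this.
\end{itemize}

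In fact $E[\xi^2]$ is not a function of $\Phi$. Take $\Phi=\mathcal{L}(-1,1)$.
\begin{itemize}
\item The variable $\xi(\gamma)=2\gamma-1$ on $[0,1]$ with Lebesgue measure (a legitimate uncertain measure) has $E[\xi^2]=1/3=\int_0^1(2\alpha-1)^2\,\mathrm{d}\alpha$.
\item The variable $\xi(\gamma)=\gamma$ on $\mathbb{R}$, with the uncertain measure constructed in the proof of the Peng--Iwamura theorem, has the same distribution. But $\mathcal{M}\{|\xi|\ge s\}=\min\{1/2,\,1-s\}$ for $0<s\le 1$, hence $E[\xi^2]=1/8+1/6=7/24$.
\end{itemize}
So for even $k$ your subadditivity argument honestly proves only $E[\xi^k]\le\int_0^1(\Phi^{-1}(\alpha))^k\,\mathrm{d}\alpha$.

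In Liu's treatment, equality for even $k$ is obtained by \emph{stipulating} $E[\xi^k]=\int_0^{+\infty}\bigl(1-\Phi(r^{1/k})+\Phi(-r^{1/k})\bigr)\,\mathrm{d}r$. This is the same kind of convention the paper invokes for the variance in Theorem 1 (stipulation (2) of Yao and Liu). The paper also relies on it when it applies this lemma with $k=2$ to a sign-changing variable in Theorem~\ref{thm:2}. You should either restrict the statement to odd $k$ or nonnegative $\xi$, or state that stipulation explicitly as an assumption. With the stipulation, your layer-cake/Fubini computation finishes the proof.
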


For uncertain response observations, we employ uncertainty theory to characterize the disturbance term, thereby establishing two types of uncertain single-index models with a crisp explanatory vector and an uncertain one. Let $\boldsymbol{\beta}=(\beta_1,\ldots,\beta_p)^T$, and assume $\|\boldsymbol{\beta}\|$=$\sqrt{\beta_1^2 + \cdots + \beta_p^2}=1$ with
$\beta_1>0$.
\begin{Def}
{\rm
Suppose that $g$ is an unknown univariate function. Let $\tilde{y}$ be an uncertain response variable, and $\epsilon$ be an uncertain disturbance term. 

(i) If $\boldsymbol{{x}}=(x_1,\ldots,x_p)$ is a \textbf{crisp} explanatory vector,  the relationship between $\tilde{y}$ and  $\boldsymbol{{x}}$ is expressed by
\begin{equation}
\tilde{y}=g(\boldsymbol{\beta}^T\boldsymbol{{x}})+\epsilon=g(\sum_{k=1}^{p}\beta_kx_{k})+\epsilon,
\label{eq:1}
\end{equation}
called an uncertain single-index model with \textbf{crisp} explanatory vector (USIC model). 

(ii)  If $\boldsymbol{\tilde{x}}=(\tilde{x}_1,\ldots,\tilde{x}_p)$ is a \textbf{uncertain} explanatory vector, we call the following form
\begin{equation}
\tilde{y}=g(\boldsymbol{\beta}^T\boldsymbol{\tilde{x}})+\epsilon=g(\sum_{k=1}^{p}\beta_k\tilde{x}_{k})+\epsilon,
\label{eq:2}
\end{equation}                                           
an uncertain single-index model with \textbf{uncertain} explanatory vector (USIU model).                                                
}
\end{Def}

In the two uncertain models (\ref{eq:1}) and (\ref{eq:2}), we first estimate the unknown function $g$ and index coefficients $\boldsymbol{\beta}$, and then conduct the residual analysis of the uncertain variable $\tilde{y}$ in the following sections. 

\section{Profile least squares estimation}\label{sec3}

In this section, we propose two methods: the uncertain Nadaraya-Watson(N-W) kernel method and the B-spline method, under the profile least squares framework, to estimate the unknown function $g$ and index coefficient $\boldsymbol{\beta}$ in the USIC and USIU models. 

\subsection{Uncertain N-W kernel estimation in the USIC model}
\label{Sect:3.1}
For the USIC model (\ref{eq:1}), let $(\boldsymbol{{x}_i},\tilde{y}_i)$ $(i=1,2,...,n)$ be the observed data, where $\boldsymbol{{x}_i}=(x_{i1}, x_{i2}, \ldots, x_{ip})^T(i=1,\ldots, n)$ is the $i$th $p$-dimensional explanatory vector,  and $\tilde{y}_i$ is the uncertain response variable.  Then, based on the model (\ref{eq:1}), we have
	\begin{equation}
		\tilde{y}_i=g(\boldsymbol{\beta}^T\boldsymbol{{x}_i})+\epsilon_i=g(\sum_{k=1}^{p}\beta_kx_{ik})+\epsilon_i, i=1,\ldots, n,
		\label{eq:3}
\end{equation}
Let $\hat g$ and $\hat{\bm{\beta}}$ be the estimators of $g$ and $\bm{\beta}$ in the model (\ref{eq:3}).  By  profile least-squares method,  $\hat g$ and $\hat{\bm{\beta}}$ are the solutions to the minimization problem
\begin{equation}
(\hat g, \hat{\bm{\beta}})=\argmin_{g,\bm{\beta}} \sum_{i=1}^{n} E\bigl[(\tilde{y}_i - g(\sum_{k=1}^{p}\beta_kx_{ik}))^2\bigr].
\label{eq:4}
\end{equation}

	

\begin{thm}
{\rm	
In the USIC model (\ref{eq:3}), the minimization problem (\ref{eq:4}) is equivalent to the optimization problem
\begin{equation}
(\hat g, \hat{\bm{\beta}})=\min_{g,\bm{\beta}} \sum_{i=1}^{n} (E[\tilde{y}_i]-g(\sum_{k=1}^{p}\beta_kx_{ik}))^2.
\label{eq:5}
\end{equation}
}
\end{thm}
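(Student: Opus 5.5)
The argument is a bias–variance split of each summand in (\ref{eq:4}). It is carried out for fixed $g$ and $\bm{\beta}$, so that each number $c_i=g(\sum_{k=1}^{p}\beta_kx_{ik})$ is a crisp constant. Each $\tilde y_i$ is assumed to have a regular uncertainty distribution $\Phi_i$ with finite second moment.

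\textbf{Step 1 (identify the uncertain variable).} The map $t\mapsto t-c_i$ is strictly increasing. By Lemma \ref{lem22}, the uncertain variable $\tilde y_i-c_i$ therefore has inverse uncertainty distribution $\Phi_i^{-1}(\alpha)-c_i$.

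\textbf{Step 2 (compute the expectation).} By Lemma \ref{lem23} with $k=2$,
\begin{equation*}
E\bigl[(\tilde y_i-c_i)^2\bigr]=\int_0^1\bigl(\Phi_i^{-1}(\alpha)-c_i\bigr)^2\,\mathrm{d}\alpha .
\end{equation*}
Adding and subtracting $E[\tilde y_i]=\int_0^1\Phi_i^{-1}(\alpha)\,\mathrm{d}\alpha$ inside the square and expanding gives three terms. The cross term is
\begin{equation*}
2\,(E[\tilde y_i]-c_i)\int_0^1\bigl(\Phi_i^{-1}(\alpha)-E[\tilde y_i]\bigr)\,\mathrm{d}\alpha ,
\end{equation*}
and the integral vanishes by the definition of $E[\tilde y_i]$. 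Using the definition of variance for the remaining term, we obtain
\begin{equation*}
E\bigl[(\tilde y_i-c_i)^2\bigr]=V[\tilde y_i]+\bigl(E[\tilde y_i]-c_i\bigr)^2 .
\end{equation*}

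\textbf{Step 3 (conclude).} Summing over $i$, the objective in (\ref{eq:4}) equals
\begin{equation*}
\sum_{i=1}^n V[\tilde y_i]+\sum_{i=1}^n\Bigl(E[\tilde y_i]-g\bigl(\textstyle\sum_{k=1}^{p}\beta_kx_{ik}\bigr)\Bigr)^2 .
\end{equation*}
The first sum depends only on the observed data, not on $(g,\bm{\beta})$. Hence the two objectives differ by an additive constant, and they have the same minimizers $(\hat g,\hat{\bm{\beta}})$. This is exactly the asserted equivalence with (\ref{eq:5}).

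\textbf{Main obstacle.} The delicate point is justifying the moment formula. Uncertain expectation is not linear in general, so one cannot simply expand $E[(\tilde y_i-c_i)^2]$ termwise. This is why the argument goes through the inverse distribution: Lemma \ref{lem22} yields the inverse distribution of $\tilde y_i-c_i$, and Lemma \ref{lem23} is then applied to that variable directly. Once everything is written as an integral over $\alpha\in(0,1)$, the expansion in Step 2 is ordinary integral linearity and is rigorous.
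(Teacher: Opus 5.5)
Your proof is correct and follows the same route as the paper: split each summand as $V[\tilde y_i]+(E[\tilde y_i]-g(\bm{\beta}^T\bm{x_i}))^2$, then observe that $\sum_i V[\tilde y_i]$ does not involve $(g,\bm{\beta})$, so the two objectives have the same minimizers. The one difference is how the identity is justified: the paper takes it from stipulation (2) of Yao and Liu (2018), while you derive it from Lemmas~\ref{lem22} and~\ref{lem23}. Because $x\mapsto x^2$ is not monotone, applying Lemma~\ref{lem23} with $k=2$ to the sign-changing variable $\tilde y_i-c_i$ is itself that inverse-distribution convention rather than a consequence of the axioms, so your justification rests on the same footing as the paper's and is no more rigorous than it.
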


\begin{proof}
Following the stipulation (2) of \citet{YaoLiu2018}, we have
\begin{equation*}
 E\bigl[(\tilde{y}_i-g(\sum_{k=1}^{p}\beta_kx_{ik}))^2\bigr]=V[\tilde{y_i}]+(E[\tilde{y_i}]-g(\sum_{k=1}^{p}\beta_kx_{ik}))^2.
\end{equation*}
Furthermore, we get the following equation
\begin{equation*}
\sum_{i=1}^{n}E\bigl[(\tilde{y}_i-g(\sum_{k=1}^{p}\beta_kx_{ik}))^2\bigr]=\sum_{i=1}^{n}V[\tilde{y_i}]+\sum_{i=1}^{n}(E[\tilde{y_i}]-g(\sum_{k=1}^{p}\beta_kx_{ik}))^2.
\end{equation*}
Given observed data, the sum of $V[\tilde{y}_i]$ is fixed. Thus, the minimization problem~(\ref{eq:4}) is equivalent to the optimization (\ref{eq:5}).
\end{proof}
To estimate the unknown function $g$ and index coefficients $\bm{\beta}$ in (\ref{eq:4}), we adopt a two-step iterative procedure as follows: \\
\indent \textbf{Step 1}: Estimate the index coefficients using a general link function (e.g., the identity function). \\
\indent \textbf{Step 2}: Treat the direction in the predictor space determined by the index coeffcient as known and estimate the unknown function $g$ via a nonparametric smoother. \\
\indent \textbf{Step 3}: Reestimate the index coefficients based on the estimated link function, and iterate between the two steps.

Before initiating the iterative procedure, a suitable approximation method of $g$ must be obtained. Several approaches are available, and we adopt the uncertain Nadaraya-Watson (N-W) kernel method for its computational simplicity to address the least-squares problem (\ref{eq:4}). For a fixed ${\bm{\beta}}$, the uncertain N-W kernel estimator of unknown function $g$ evaluated at $\bm{{\beta}^T{x}}$ can be expressed as
\begin{equation}
\hat{g}(\bm{{\beta}}^T\bm{{x}}|\bm{\beta})=\frac{\sum_{j=1}^{n} E[\tilde{y}_j] K_h\left(\bm{\beta}^T \bm{x} - \bm{\beta}^T\bm{ x_j}\right)} {\sum_{j=1}^{n} K_h\left(\bm{\beta}^T \bm{x} - \bm{\beta}^T \bm{x_j}\right)}.
\label{eq:6}
\end{equation}
In the estimation of $\bm{\beta}$, the leave-one-out uncertain N-W estimator is commonly adopted to eliminate the bias caused by using the $i$th observation itself in kernel smoothing:
\begin{equation}
\hat{g}^{(-i)}(\bm{{\beta}}^T\bm{{x_i}}|\bm{\beta})=\frac{\sum_{j \neq i} E[\tilde{y}_j] K_h\left(\bm{\beta}^T \bm{x_i} - \bm{\beta}^T \bm{x_j}\right)}{\sum_{j \neq i} K_h\left(\bm{\beta}^T \bm{x_i} - \bm{\beta}^T \bm{x_j}\right)},
\label{eq:7}
\end{equation}
where $K_h(u)=h^{-1}K(u/h)$ denotes the scaled kernel function, with $K(\cdot)$ being a fixed base kernel, and $h>0$ is the bandwidth controlling the range of local averaging. 
Substituting the unknown function $g$ with the leave-one-out N-W estimator (\ref{eq:3}), the problem (\ref{eq:4}) reduces to
\begin{equation}
\hat{\bm{\beta}}=\argmin_{\bm{\beta}} \sum_{i} (E[\tilde{y}_i]-	\hat{g}^{(-i)}(\bm{{\beta}}^T\bm{{x_i}}|\bm{\beta}))^2.
\label{eq:8}
\end{equation}

\begin{remark}
{\rm
To avoid the problem that the denominators may be close to zero in equation (\ref{eq:7}), let $A \subseteq \mathbb{R}^p$ be a chosen set, and $x_A$ be the elements of the set $A$, we assume $x_A$ has the same distribution as $x$ when $x \in A$. In equation (\ref{eq:8}), $\sum_{i}$ represents the sum taken over all indices $i$ for which $x_i \in A$.
}
\end{remark}

The kernel function performs a weighted average of the single indices $\{ \bm{\beta}^T \bm{x_j} \}_{j \neq i}$ over data points $\bm{x_j}$ at different positions, assigning larger weights to data points for which the single-index $\bm{\beta}^T \bm{x_j}$ is closer to $\bm{\beta}^T\bm{x_i}$. For this purpose, the kernel function is usually chosen to satisfy the conditions of non-negativity, normalization, and symmetry. Within the uncertainty theory framework, the derivative function of the standard uncertain normal distribution is a suitable choice due to its favorable theoretical properties and good empirical performance. The corresponding mathematical form is:
\begin{equation}
K(u) = \frac{\pi}{\sqrt{3}} \exp\left(-\frac{\pi u}{\sqrt{3}}\right) \left(1 + \exp\left(-\frac{\pi u}{\sqrt{3}}\right)\right)^{-2}.
\label{eq:9}
\end{equation}

In kernel regression analysis, the performance of the N-W kernel estimation method depends heavily on the selection of the bandwidth $h$. As a critical parameter of the kernel function, bandwidth directly determines the trade-off between bias and variance of the estimation results. An excessively large $h$ causes kernel over-smoothing, leading the estimator to overlook local characteristics and thus incur substantial bias. Conversely, an overly small $h$ makes the estimator over-sensitive to data noise and outliers, leading to overfitting and a sharp increase in variance. For selecting model parameters, V-fold cross-validation is recommended for its effectiveness, as discussed by Ding and Zhang (\citeyear{DingZhang2021}). This method is mathematically expressed as:
\begin{equation}
 {CV}(h) = \frac{1}{V} \sum_{v=1}^{V} \sum_{i \in I_{-v}} E( \tilde{y}_i - \hat{g}^{(-i)}(\bm{{\beta}}^T\bm{{x_i}}|\bm{\beta}))^2,
\label{eq:10}
\end{equation}
where  $\hat{g}^{(-i)}(\bm{{\beta}}^T\bm{{x_i}}|\bm{\beta})$ is obtained from the $v$th training subset, and $I_{-v}$ stands for the corresponding validation subset. The optimal parameter $h$ is derived by minimizing $CV(h)$, i.e., $\hat{h} = \arg\min_{h} {CV}(h)$.

Owing to the complexity of Equation (\ref{eq:10}), an analytical solution for its minimizer cannot be easily derived. We employ the Fibonacci search algorithm as the core optimization tool for its high efficiency and precision, and it is well-suited for the unimodal cross-validation function minimization task corresponding to Equation (\ref{eq:10}). The detailed implementation of the algorithm is presented in Algorithm \ref{alg:fibonacci_bandwidth}.
\begin{algorithm}[t]
	\caption{Calculate the optimal bandwidth $h^*$ }
	\label{alg:fibonacci_bandwidth}
	\SetAlgoLined 
	\SetKwInput{KwInput}{Input}
	\SetKwInput{KwOutput}{Output}
	\KwIn{
		The dataset $D = (\tilde{x}_1, \cdots, \tilde{x}_p, \tilde{y})$, 
		the search interval $[h_{\text{min}}, h_{\text{max}}]$,\\
		\quad \quad \quad \quad
		the maximum number of iterations $n$, 
		the precision threshold $\epsilon$
	}
	\KwOut{The optimal bandwidth $h^*$ and its cross-validation value $CV(h^*)$}
	
	$\text{fib}(0) \gets 0$, $\text{fib}(1) \gets 1$\;
	\For{$k = 2$ \KwTo $n$}{
		$\text{fib}(k) \gets \text{fib}(k-1) + \text{fib}(k-2)$\;
	}

	$\rho \gets 1 - \frac{\text{fib}(n-1)}{\text{fib}(n)}$\;
	$h_l \gets h_{\text{min}} + \rho \cdot (h_{\text{max}} - h_{\text{min}})$,\quad	$h_r \gets h_{\text{max}} - \rho \cdot (h_{\text{max}} - h_{\text{min}})$\;
	
	Calculate $CV(h_l)$ and $CV(h_r)$ from Equation (\ref{eq:9})\;
	$\text{tol} \gets |CV(h_{\text{max}}) - CV(h_{\text{min}})|$\;
	
	\While{$n > 2$ \textbf{and} $\text{tol} > \epsilon$}{
		$n \gets n - 1$,\quad$\rho \gets 1 - \frac{\text{fib}(n-1)}{\text{fib}(n)}$\;
		
		\eIf{$CV(h_l) < CV(h_r)$}{
			$h_{\text{max}} \gets h_r$\;
			$h_r \gets h_l$,\quad$CV(h_r) \gets CV(h_l)$\;
			$h_l \gets h_{\text{min}} + \rho \cdot (h_{\text{max}} - h_{\text{min}})$\;
			Calculate $CV(h_l)$\;
		}
		{
			$h_{\text{min}} \gets h_l$\;
			$h_l \gets h_r$,\quad$CV(h_l) \gets CV(h_r)$\;
			$h_r \gets h_{\text{max}} - \rho \cdot (h_{\text{max}} - h_{\text{min}})$\;
			Calculate $CV(h_r)$\;
		}
		$\text{tol} \gets |CV(h_{\text{max}}) - CV(h_{\text{min}})|$\;
	}
	
	$h^* \gets \frac{h_{\text{min}} + h_{\text{max}}}{2}$,\quad Calculate $CV(h^*)$\;
	\Return Optimal bandwidth $h^*$ and $CV(h^*)$\;
\end{algorithm}

Based on the link function and coefficients estimated by Eq.(\ref{eq:7}) and (\ref{eq:8}), we define the $i$th residual 
\begin{equation}
\hat{\epsilon}_i=\tilde{y}_i -\hat{g}^{(-i)}(\sum_{k=1}^{p}\hat{\beta}_kx_{ik}),\quad i = 1, \dots, n.
\label{eq:11}
\end{equation}
Under the assumptions that the disturbance terms $\epsilon_1,\ldots,\epsilon_n$ are independent and that their expectation and variance exist, the expectation and variance of $\hat{\epsilon}$ can be estimated as:
\begin{equation}
\hat{e} = \frac{1}{n} \sum_{i=1}^{n} E\left[\hat{\epsilon}_i\right],\quad 
 \hat{\sigma}^2 = \frac{1}{n} \sum_{i=1}^{n} E(\hat{\epsilon}_i - \hat{e})^2.
 \label{eq:12}
\end{equation}

Given the new $\bm{x}=(x_1, x_2,\ldots, x_p)^T$,  substituting leave-one-out N-W estimator with its general form $(\ref{eq:6})$ and following Liu (\citeyear{Liu2026}), the forecast uncertain variable of $\tilde{y}$ can be expressed as
$
\hat{y}=\hat{g}(\hat{\bm{\beta}}^T\bm{x})+\hat{\epsilon}.
$
Thus, the forecast value $\hat{\mu}$ is obtained by
\begin{equation}
\hat{\mu}=E[\hat{y}]=\hat{g}(\hat{\bm{\beta}}^T\bm{x})+\hat{e}.
\label{eq:13}
\end{equation}
Assume that $\hat{y}$ and the disturbance term $\epsilon$ follow the uncertainty distributions $\hat{\Psi}$ and $\Omega$, respectively. The inverse uncertainty distribution of  $\hat{y}$ is
$
\hat{\Psi}^{-1}(\alpha) =\hat{g}(\hat{\rm\beta}^T\bm{x}) + \Omega^{-1}(\alpha).
$
Given a confidence level $\alpha$,   let $a^{*}$ be the minimum value such that
\begin{equation}
\hat{\Psi}(\hat{\mu} + a^{*}) - \hat{\Psi}(\hat{\mu} - a^{*}) \geq \alpha.
\label{eq:14}
\end{equation}
It leads that the confidence interval for response variable $\hat{y}$ is $[\hat{\mu}-a^{*}, \hat{\mu}+a^{*}]$.

\subsection{B-spline estimation in the USIU model}
\label{Sect:3.2}

When both of the explanatory and response variables are uncertain, solving the least-squares problem requires imposing a strict monotonicity constraint on the approximation function. Since the N-W estimator is uncertain, it lacks an explicit functional structure, making it hard to enforce a rigorous global monotonicity constraint. Therefore, we adopt the polynomial spline method, in particular the B-spline method, for its explicit functional form and for the ease of differentiating the approximation function. 

In the USIU model (\ref{eq:2}), let $(\boldsymbol{\tilde{x}_i},\tilde{y}_i)$ $(i=1,\ldots,n)$ be a set of observed data, where $\boldsymbol{\tilde{x}_i}=(\tilde{x}_{i1},\ldots,\tilde{x}_{ip})^T$ denotes independent uncertain explanatory vector with regular uncertainty distribution $\Phi_{ik}$, and $\tilde{y}_i$ are independent response variables with regular uncertainty distribution $\Psi_{i}$ for $i=1,\ldots,n$ and $k=1,\ldots,p$. From the model ({\ref{eq:2}}), it follows that
	\begin{equation}
		\tilde{y}_i=g(\boldsymbol{\beta}^T\boldsymbol{\tilde{x}_i})+\epsilon_i=g(\sum_{k=1}^{p}\beta_k\tilde{x}_{ik})+\epsilon_i,i=1,\ldots, n.
		\label{eq:15}
\end{equation}

Let $l$ be a positive integer for the B-spline degree, and \(c=t_1<t_2<\dots<t_q=d\) be a knot sequence that partitions the interval $[c,d]$, $m=q+l-1$ denotes the number of B-spline basis functions. Denote the basis functions vector $\bm{\delta}^T(\tilde{t})=(\delta_1(\tilde{t}),\ldots,\delta_m(\tilde{t}))^T$ and  ${\bm{b}}=({b}_1,\ldots,{b}_m)^T$. Then, the B-spline estimate of a smooth function $g$ in the USIU model (\ref{eq:15}) is
$
\hat{g}(\tilde{t}_i) = \bm{\delta}^T(\tilde{t}_i){\bm{b}},
$
where $\tilde{t}_i=\bm{\beta}^T \boldsymbol{\tilde{x}_i}$. Let $\hat{\bm{b}}=(\hat{b}_1,\ldots,\hat{b}_m)^T$ be the estimator of ${\bm{b}}$.  
Under these conditions, the least squares estimate of the unknown function $g$ and the parameters in model (\ref{eq:15}) solves
\begin{equation}
 (\hat{\bm{b}}, \hat{\bm{\beta}})=\argmin_{\bm{b},\bm{\beta}} \sum_{i=1}^n E[(\tilde{y}_i - \bm{\delta}^T(\bm{\beta}^T \boldsymbol{\tilde{x}_i})\bm{b})^2].
\label{eq:16}
\end{equation}

\begin{thm}
\label{thm:2}
{\rm 
Assume that the function $g$ in model (\ref{eq:15}) is of class $C^1$ and strictly increasing. The minimization problem (\ref{eq:16}) is reduced to the following problem
\begin{equation}
 (\hat{\bm{b}}, \hat{\bm{\beta}})=\argmin_{\bm{b}, \bm{\beta}}  \sum_{i=1}^{n} \int_0^1 ( \Psi_{i}^{-1}(\alpha) - \bm{\delta}^\mathrm{T}(\sum_{k=1}^{p}\beta_k\varphi_{ik}^{-1}(\alpha)\big)\bm{b} )^2 d\alpha, 
\label{eq:17}
\end{equation}
satisfying $( \bm{\delta}^\mathrm{T}(t) \bm{\hat{b}})' > 0 \text{ for } t \in \mathbb{R}$  when $l \geq 2$, where
\begin{equation}\label{varphi}
\varphi_{ik}(\alpha)=
\begin{cases} 
\Phi_{ik}(\alpha), & \beta_k < 0, \\
\Phi_{ik}(1-\alpha),  & \beta_k > 0,
\end{cases}
\quad i=1, \dots, n, \quad k=1,\dots, p.
\end{equation}
}
\end{thm}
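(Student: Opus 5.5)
The plan is to write the integrand of (\ref{eq:16}) as a strictly monotone function of independent uncertain variables, apply Lemma \ref{lem22} to get its inverse uncertainty distribution, and then use Lemma \ref{lem23} to turn the uncertain expectation into an integral over $\alpha\in(0,1)$.

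\textbf{Step 1: the approximating spline.} Write $\hat g(t)=\bm{\delta}^T(t)\bm{b}$. I will impose the constraint $(\bm{\delta}^T(t)\bm{b})'>0$, which makes $\hat g$ strictly increasing. This is consistent with the hypothesis that $g$ is $C^1$ and strictly increasing. The condition $l\ge 2$ is needed because for $l\ge 2$ the B-spline combination is $C^1$, so its derivative is defined everywhere; for $l=1$ the pieces are constant and no strict monotonicity is possible.

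\textbf{Step 2: the inverse distribution of the index.} Fix $i$ and consider $\tilde t_i=\sum_k\beta_k\tilde x_{ik}$. This is strictly increasing in $\tilde x_{ik}$ when $\beta_k>0$ and strictly decreasing when $\beta_k<0$. Coordinates with $\beta_k=0$ drop out, so I will restrict attention to $\beta_k\neq0$. Lemma \ref{lem22} then gives the inverse distribution
\[
\Upsilon_i^{-1}(\alpha)=\sum_{\beta_k>0}\beta_k\Phi_{ik}^{-1}(\alpha)+\sum_{\beta_k<0}\beta_k\Phi_{ik}^{-1}(1-\alpha).
\]
I will rewrite this in the $\varphi_{ik}^{-1}$ notation of (\ref{varphi}). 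The orientation (which of $\alpha$, $1-\alpha$ goes with which sign of $\beta_k$) must be read consistently with the final $1-\alpha$ flip in Step 3. Since $\hat g$ is strictly increasing, the variable $\hat g(\tilde t_i)$ has inverse distribution $\hat g(\Upsilon_i^{-1}(\alpha))$, again by Lemma \ref{lem22}.

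\textbf{Step 3: the difference $\tilde y_i-\hat g(\tilde t_i)$.} This is increasing in $\tilde y_i$ and decreasing in each $\tilde x_{ik}$ through $\hat g$. Assuming $\tilde y_i$ is independent of the $\tilde x_{ik}$, Lemma \ref{lem22} gives the inverse distribution
\[
\Psi_i^{-1}(\alpha)-\bm{\delta}^T\Big(\sum_k\beta_k\varphi_{ik}^{-1}(\alpha)\Big)\bm{b},
\]
where the $1-\alpha$ flip is absorbed into the definition of $\varphi_{ik}$. In this expression $\beta_k>0$ is paired with $\Phi_{ik}^{-1}(1-\alpha)$ and $\beta_k<0$ with $\Phi_{ik}^{-1}(\alpha)$, which is exactly (\ref{varphi}).

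\textbf{Step 4: the expected square.} Lemma \ref{lem23} cannot be applied directly to the square, because squaring is not monotone. I would instead follow the stipulation of \citet{YaoLiu2018} already used in the proof of Theorem 1. That stipulation treats the residual as the uncertain variable with the inverse distribution from Step 3, and computes the expected square as $\int_0^1(\text{inverse distribution})^2d\alpha$. This is the $k=2$ case of Lemma \ref{lem23} applied to the difference variable, whose distribution is regular. Summing over $i$ then yields (\ref{eq:17}).

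The main obstacle is Step 4: justifying $E[\zeta^2]=\int_0^1(\Upsilon^{-1}(\alpha))^2d\alpha$ for a difference variable $\zeta$ that may take both signs. I would handle it by appealing to Lemma \ref{lem23} directly, since it covers any regular $\zeta$. A secondary difficulty is making sure the monotonicity constraint is placed on the fitted spline, so that Lemma \ref{lem22} is applicable at the minimizer.
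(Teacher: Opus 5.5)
Your proposal is correct and follows essentially the paper's route: apply Lemma \ref{lem22} to $\tilde y_i-\bm{\delta}^T(\sum_k\beta_k\tilde x_{ik})\bm{b}$, with the direction of monotonicity in each $\tilde x_{ik}$ set by the sign of $\beta_k$ (giving the $\varphi_{ik}^{-1}$ pairing), and then Lemma \ref{lem23} with $k=2$. Imposing monotonicity on the spline before invoking Lemma \ref{lem22} is in fact slightly cleaner than the paper's step of computing with $g$ and then substituting the spline; one side remark is wrong, though: $l$ is the spline degree here, so $l=1$ gives piecewise linear (not constant) fits that can be strictly increasing, and the $l\ge 2$ clause only concerns where the derivative exists, as the Remark following the theorem explains.
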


\begin{proof} 
Let $f(\tilde{y}_i,\tilde{x}_{i1},\ldots,\tilde{x}_{ip})
=\tilde{y}_i-g(\sum_{k=1}^{p}\beta_k\tilde{x}_{ik})$. For each $i$, we have
\begin{equation*}
\dfrac{\partial f}{\partial \tilde{y}_i} = 1,\quad
\dfrac{\partial f}{\partial \tilde{x}_{ik}} = -\beta_kg'(\sum_{k=1}^{p}\beta_k\tilde{x}_{ik}), \quad k=1,\dots, p.
\end{equation*}
Since $g$ is strictly increasing, it is obvious that the monotonicity of $f$ with respect to $\tilde{x}_{ik}$ is determined by $\beta_k$.
From Lemma \ref{lem22}, the inverse uncertainty distribution of $\tilde{y}_i-g(\sum_{k=1}^{p}\beta_k\tilde{x}_{ik})$ is derived as 
$
\Psi_{i}^{-1}(\alpha) - g(\sum_{k=1}^{p}\beta_k\varphi_{ik}^{-1}(\alpha)).
$
From Lemma \ref{lem23}, we have
\begin{equation*}
E[\tilde{y}_i - g(\sum_{k=1}^{p}\beta_k\tilde{x}_{ik})]^2 =\int_0^1 ( \Psi_{i}^{-1}(\alpha) - g(\sum_{k=1}^{p}\beta_k\varphi_{ik}^{-1}(\alpha)))^2 d\alpha
\end{equation*}
\\
For $l \geq 2$, substituting $g(\sum_{k=1}^{p}\beta_k\tilde{x}_{ik})$ with $\bm{\delta}^\mathrm{T}(\sum_{k=1}^{p}\beta_k\tilde{x}_{ik})\bm{\hat{b}}$, the proof   is thus complete.
\end{proof}

\begin{remark}
{\rm
For the formula (\ref{eq:17}), if the B-spline basis functions are of degree 1, the resulting fitted curve will be piecewise linear and thus non-differentiable at the knots. In this case, it suffices to ensure that the derivative is positive over each interval between adjacent knots to guarantee global monotonicity. In contrast, for B-splines of degree 2 or higher, the fitted curves are smooth over the entire domain. Therefore, monotonicity can be ensured by requiring that the derivative of the approximating function remains positive everywhere.
}
\end{remark}
When solving the problem $(\ref{eq:17})$, we also adopt the two-step iterative method similar to that used in  Section \ref{Sect:3.1}, obtaining the optimal estimate of $\bm{\beta}$ and $\bm{b}$.
Meanwhile, the selection of the number of basis functions is also crucial. Selecting an appropriate number of basis functions, $m$, is conducive to achieving a good fit for the unknown function while improving computational efficiency. The V-fold cross-validation method can also be used for this purpose, as discussed by Ding and Zhang (\citeyear{DingZhang2021}) and practiced by other scholars.
For each $m$, the $CV(m)$ is mathematically expressed as:
\begin{equation}
{CV}(m) = \frac{1}{V} \sum_{v=1}^{V} \sum_{i \in I_{-v}} \int_0^1( \Psi_{i}^{-1}(\alpha)-	\bm{\delta}^T(\sum_{k=1}^{p}\beta_k\varphi_{ik}^{-1}(\alpha))\bm{\hat{b}})^2 d\alpha,
\label{eq:19}
\end{equation}
where $\bm{\hat{b}}$ is obtained from the $v$th training subset, and $I_{-v}$ stands for the corresponding validation subset. The optimal parameter $m$ is derived by minimizing $CV(m)$, i.e., $\hat{m} = \arg\min_{m} {CV}(m)$.


Given the fitted USIU model derived by Eq.(\ref{eq:15}), the $i$th residual is expressed by
\begin{equation}
\hat{\epsilon}_i=\tilde{y}_i-\bm{\delta}^T(\sum_{k=1}^{p}\hat{\beta}_k\tilde{x}_{ik})\hat{\bm{b}},\quad i = 1, \dots, n.
\label{eq:20}
\end{equation}
Assume that the disturbance terms $\epsilon_1,\ldots,\epsilon_n$ are independent and identically distributed, and that their expectation and variance exist, the expected and variance values of $\hat{\epsilon}$ can be estimated as
\begin{equation*}
\hat{e} = \frac{1}{n} \sum_{i=1}^{n} E\left[\hat{\epsilon}_i\right],
 \quad
\hat{\sigma}^2 = \frac{1}{n} \sum_{i=1}^{n} E\left[(\hat{\epsilon}_i - \hat{e})^2\right].
\end{equation*}
\begin{thm}
\label{thm:3}
{\rm
Under the assumption in Theorem $\ref{thm:2}$, the expected and variance value of the disturbance term $\epsilon$ in USIU model (\ref{eq:17}) is estimated by
\begin{align*}
\hat{e} &= \frac{1}{n} \sum_{i=1}^{n}\int_0^1( \Psi_{i}^{-1}(\alpha)-	\bm{\delta}^T(\sum_{k=1}^{p}\hat{\beta}_k\varphi_{ik}^{-1}(\alpha))\hat{\bm{b}}) d\alpha,
\\
\hat{\sigma}^2 &= \frac{1}{n} \sum_{i=1}^{n}\int_0^1( \Psi_{i}^{-1}(\alpha)-	\bm{\delta}^T(\sum_{k=1}^{p}\hat{\beta}_k\varphi_{ik}^{-1}(\alpha))\hat{\bm{b}}-\hat{e})^2 d\alpha,
\end{align*} 
where $\varphi_{ik}(\alpha)$ is defined in (\ref{varphi}).
}
\end{thm}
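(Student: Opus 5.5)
The plan is to compute the inverse uncertainty distribution of each residual $\hat{\epsilon}_i$ in (\ref{eq:20}) and then read off $E[\hat{\epsilon}_i]$ and $E[(\hat{\epsilon}_i-\hat e)^2]$ from Lemma \ref{lem23} and the definition of expectation. Averaging over $i$ in the estimators defined just before the theorem then gives the stated formulas.

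\textbf{Step 1 (monotonicity).} Fix $i$ and set
\[
f(y,x_1,\ldots,x_p)=y-\bm{\delta}^T\Bigl(\sum_{k=1}^p\hat\beta_k x_k\Bigr)\hat{\bm b},
\]
so that $\hat{\epsilon}_i=f(\tilde y_i,\tilde x_{i1},\ldots,\tilde x_{ip})$. By Theorem \ref{thm:2}, the fitted spline $t\mapsto \bm{\delta}^T(t)\hat{\bm b}$ is constrained to have positive derivative, so it is strictly increasing. Hence $f$ is strictly increasing in $y$. In $x_k$ it is strictly decreasing when $\hat\beta_k>0$ and strictly increasing when $\hat\beta_k<0$. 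Coordinates with $\hat\beta_k=0$ do not enter $f$ and can be dropped.

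\textbf{Step 2 (inverse distribution).} The variables $\tilde y_i,\tilde x_{i1},\ldots,\tilde x_{ip}$ are independent with regular distributions. Lemma \ref{lem22} therefore gives the inverse uncertainty distribution of $\hat\epsilon_i$ as
\[
\Upsilon_i^{-1}(\alpha)=\Psi_i^{-1}(\alpha)-\bm{\delta}^T\Bigl(\sum_{k=1}^p\hat\beta_k\varphi_{ik}^{-1}(\alpha)\Bigr)\hat{\bm b}.
\]
Here the argument of $\Phi_{ik}^{-1}$ is $1-\alpha$ when $\hat\beta_k>0$ (a decreasing coordinate) and $\alpha$ when $\hat\beta_k<0$. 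This is the convention encoded in (\ref{varphi}), read as a statement about the inverse functions, exactly as in the proof of Theorem \ref{thm:2}. One should also check that $\Upsilon_i^{-1}$ is continuous and strictly increasing, so that the distribution is regular. This follows because each term is monotone in $\alpha$ in the right direction and $\Psi_i^{-1}$ is strictly increasing.

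\textbf{Step 3 (moments).} By the definition of expectation, $E[\hat\epsilon_i]=\int_0^1\Upsilon_i^{-1}(\alpha)\,d\alpha$; averaging over $i$ gives the formula for $\hat e$. For the variance, $\hat\epsilon_i-\hat e$ is a strictly increasing translate of $\hat\epsilon_i$, so its inverse distribution is $\Upsilon_i^{-1}(\alpha)-\hat e$. Lemma \ref{lem23} with $k=2$ then gives
\[
E[(\hat\epsilon_i-\hat e)^2]=\int_0^1\bigl(\Upsilon_i^{-1}(\alpha)-\hat e\bigr)^2d\alpha,
\]
and averaging over $i$ yields $\hat\sigma^2$.

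The main obstacle is Step 1. Lemma \ref{lem22} requires strict monotonicity of $f$ in every argument, and this must hold for the fitted spline, not merely for the true $g$. The positive-derivative constraint from Theorem \ref{thm:2} supplies it. For degree $l=1$ I would appeal instead to the remark following Theorem \ref{thm:2}, where the derivative is positive on each knot interval, which still gives global strict monotonicity.
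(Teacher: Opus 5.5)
Your proposal is correct and takes the same route as the paper. The paper's proof is the single line that the formulas follow directly from Lemma \ref{lem23}, implicitly reusing the residual's inverse uncertainty distribution obtained via Lemma \ref{lem22} in the proof of Theorem \ref{thm:2}. You make those implicit steps explicit: strict monotonicity is needed for the fitted spline $\bm{\delta}^T(t)\hat{\bm b}$ (supplied by the constraint), not only for the true $g$; the signs in (\ref{varphi}) are those of $\hat\beta_k$, with $\varphi_{ik}^{-1}(\alpha)$ read as $\Phi_{ik}^{-1}(1-\alpha)$ or $\Phi_{ik}^{-1}(\alpha)$; and the shift by $\hat e$ is handled before applying Lemma \ref{lem23} with $k=2$.
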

\begin{proof} It can be directly from Lemma {\ref{lem23}}.
\end{proof}
Based on the fitted USIU model and following Liu (\citeyear{Liu2026}), the forecast uncertain variable of $\tilde{y}$ respect to new $\tilde{\bm{x}}$ can be written as $\hat{y}=\bm{\delta}^T(\sum_{k=1}^{p}\hat{\beta}_k\tilde{x}_k)\hat{\bm{b}}+\hat{\epsilon}$.
Then, the forecast value $\hat{\mu}$ obtained from the following formula:
\begin{equation}
\hat{\mu}=E[\hat{y}]=\int_0^1\bm{\delta}^T(\sum_{k=1}^{p}\hat{\beta}_k\varphi_{k}^{-1}(\alpha))\hat{\bm{b}}d\alpha+\hat{e}.
\label{eq:21}
\end{equation}
It is further assumed that $\hat{y}$ and $\epsilon$ follow the uncertainty distribution $\hat{\Psi}$ and $\Omega$, and the new explanatory variables $\tilde{x}_1,\ldots,\tilde{x}_p$ follow the regular uncertainty distributions $\Phi_1,\ldots,\Phi_p$. The inverse uncertainty distribution of the predicted uncertain variable $\hat{y}$ is given by
$\hat{\Psi}^{-1}(\alpha) = \bm{\delta}^T(\sum_{k=1}^{p}\hat{\beta}_k\varphi_{k}^{-1}(\alpha))\hat{\bm{b}} + \Omega^{-1}(\alpha)$.
Given the confidence level $\alpha$, let $a^{*}$ be the minimum value such that
\begin{equation}
\hat{\Psi}(\hat{\mu} + a) - \hat{\Psi}(\hat{\mu} - a) \geq \alpha.
\label{eq:22}
\end{equation}
We thus obtain that the confidence interval for the response variable $\hat{y}$ is $[\hat{\mu}-a^{*}, \hat{\mu}+a^{*}]$.

\section{Numerical simulations}\label{sec5}

In this section, two numerical examples illustrate the performance of the proposed USIC and USIU models.
\begin{exam}{\rm
Consider that the explanatory variables $x_1,\ldots,x_p$ are precise, and the response variable $y$ is uncertain. The dataset is generated from the USIC model:
\begin{equation}
\tilde{y}=g(\sum_{k=1}^{3}\beta_kx_{k})+\epsilon,
\label{eq:23}
\end{equation}
where $\epsilon$ is an uncertain disturbance term and follows a linear uncertain distribution. Take $\bm{\beta}=(0.2,-0.4,0.9)^{T}$, and the link function 
$g(z)=(1-z)(z)^3,$
where $z=\sum_{k=1}^{3}\beta_kx_{k}$.
500 samples are generated from the model above.  Only a subset of this dataset is presented in Table \ref{tab:data_subset_1},  and the full dataset is available in the supplementary files. 
In parameter estimation for model (\ref{eq:23}), based on Eq.(\ref{eq:4}), initial values for the coefficient vector $\bm{\beta}$ are computed from the dataset $(\tilde{y}_i, x_{i1}, x_{i2}, x_{i3})$ $(i=1,2,\dots,500)$, with the unknown function $g$ initially set as the identity function, i.e. $g(z)=z$. The function $g$ is then estimated according to formula (\ref{eq:7}) along the fixed direction specified by the current $\bm{\beta}$, and a bound-constrained quasi-Newton method is utilized to execute the iterative procedure introduced in Section \ref{Sect:3.1}.
In this study, the derivative of the standard normal uncertainty distribution, given by formula (\ref{eq:9}), is employed as the kernel function. 

\begin{table}[]
	\captionsetup{
		justification=raggedright, 
		singlelinecheck=off        
	}
	\caption{A subset of data generated from model (\ref{eq:23}).}
	\label{tab:data_subset_1}
	\begin{tabular*}{\textwidth}{@{\extracolsep{\fill}} c r r r l l @{}}
		\toprule
		$i$ & $x_{1i}$ & $x_{2i}$ & $x_{3i}$ & $\tilde{y}_i$ & $\epsilon_i$ \\
		\midrule 		
		75  & -0.0987 & 1.7983 & 0.5606 &
		$\mathcal{L}(-2.0341, 2.0012)$ & $\mathcal{L}(-2.0175, 2.0178)$   \\	
		97  & 1.1292 & 0.0335 & 1.9698 &
		$\mathcal{L}(-7.7007, -7.2921)$ & $\mathcal{L}(-0.2038, 0.2047)$   \\
		162 &-0.5844 & 1.2257 & -1.1705 &
		$\mathcal{L}(-13.8058, -10.2183)$ & $\mathcal{L}(-1.7941, 1.7934)$   \\
		186 & -0.0748 & 0.2025& -1.4600 &
		$\mathcal{L}(-8.4392, -4.7934)$ &  $\mathcal{L}(-1.8234, 1.8224)$  \\
		209 & -1.3174 & -1.2565 & 1.1414 &
		$\mathcal{L}(-3.0453, 2.0054)$ & $\mathcal{L}(-2.5253	, 2.5254)$   \\
		263 & 0.6967 & 0.3017 & -1.8730	 &
		$\mathcal{L}(-12.9676, -11.1840)$ & $\mathcal{L}(-0.8919	, 0.8918)$   \\
		303 & 1.1163 & -1.5149 & 0.1707	 &
		$\mathcal{L}(-0.6989, 0.7325)$ & $\mathcal{L}(-0.7160	, 0.7154)$   \\
		368 & -1.9289 & -1.5430 & 0.3423 &
		$\mathcal{L}(-1.0604, 1.2043)$ & $\mathcal{L}(-1.1322, 1.1325)$   \\
		395 & -1.5783 & 1.2459 & -0.1553 &
		$\mathcal{L}(-3.7716, 0.3933)$ & $\mathcal{L}(-2.0820, 2.0829)$   \\
		486 & -1.5369 & 1.0293 & -1.8067 &
		$\mathcal{L}(-43.4712, -41.4521)$ & $\mathcal{L}(-1.0099, 1.0092)$   \\
		\bottomrule
	\end{tabular*}
\end{table}

For bandwidth selection, we adopt cross-validation in formula (\ref{eq:10}) to evaluate the fitting performance of the uncertain N-W estimator under different bandwidths, and we utilize the Fibonacci search algorithm described in Algorithm \ref{alg:fibonacci_bandwidth} to efficiently determine the optimal bandwidth, which yields $h_{opt}=0.0243$. In addition, the comparision of fitted curves at bandwidths \(0.1\), \(0.3\), \(0.5\), \(h_{\mathrm{opt}}\) against the true function curve are shown in Figure \ref{fig:simulationaplot1}. Using the bandwidth $h_0$,  the estimated single-index coefficients are obtained as $\hat{\beta}_1=0.2025$, $\hat{\beta}_2=-0.3961$, and $\hat{\beta}_3=0.8956$, which are very close to the true parameter vector $\bm{\beta}$ set in this simulation. 

\begin{figure}[H] 
	\centering
	\includegraphics[width=0.6\linewidth]{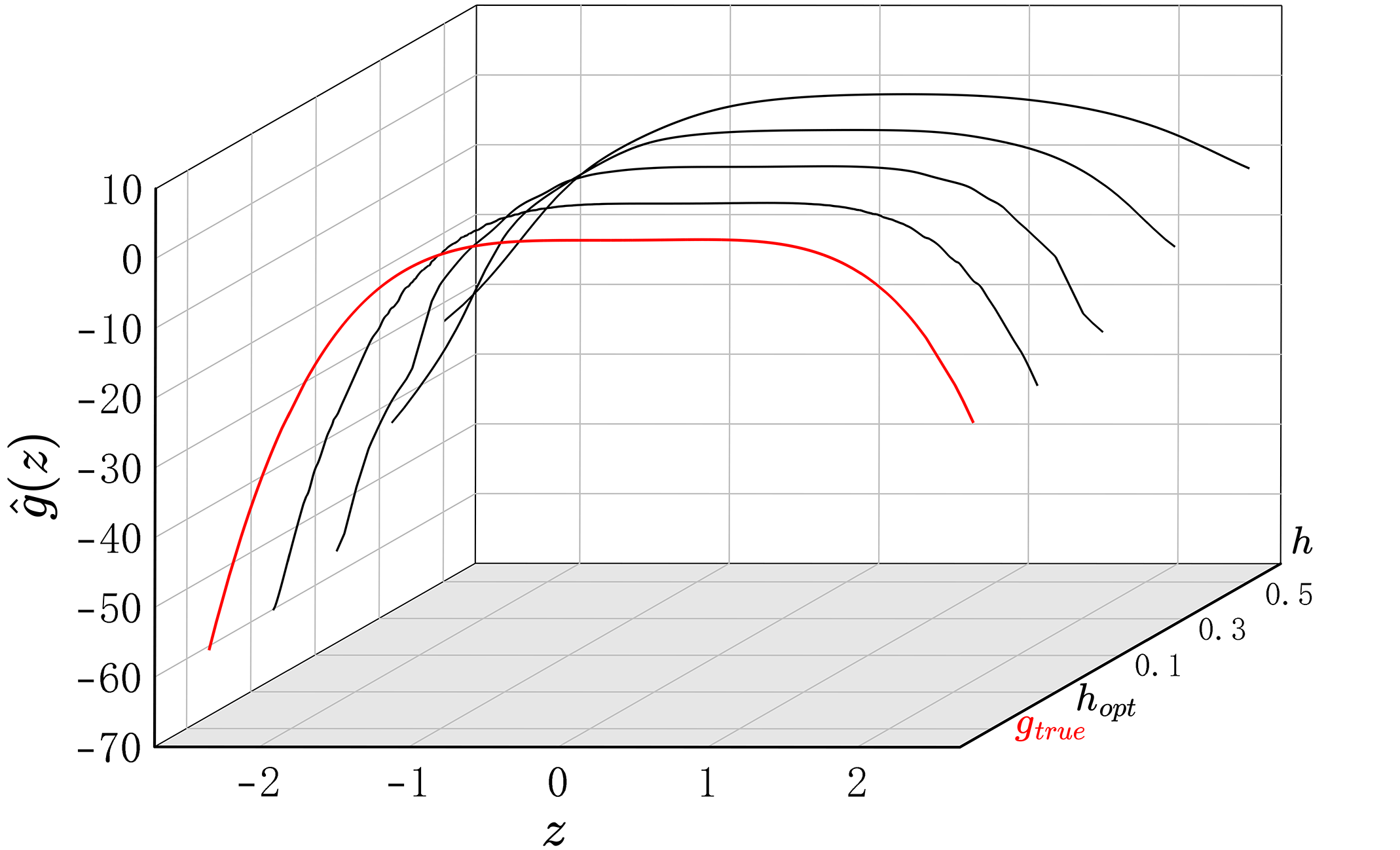}
	\caption{Comparison of fitted curves at bandwidths \(0.1\), \(0.3\), \(0.5\), \(h_{\mathrm{opt}}\) against the true function curve.}
	\label{fig:simulationaplot1}
\end{figure}

Based on the model estimated above, we can calculate the estimates of the expectation and variance of the disturbance as: $\hat{e}=0.0448$, $\hat{\sigma}^2=0.0097$. Further, we assume that the disturbance follows the normal uncertain distribution $\mathcal{N}(0.0448, 0.0985)$. Uncertain hypothesis test (Ye and Liu, \citeyear{YeLiu2022}) can verify the validity of this assumption, and we write it as follows:
\begin{equation*}
H_0: e = 0.0448 \text{ and } \sigma = 0.0985 \quad \text{v.s.} \quad H_1: e \neq 0.0448 \text{ or } \sigma \neq 0.0985.
\end{equation*}
Specify the significance level $\alpha=0.05$, it has $\Theta_{h_0}^{-1}(0.05) = -0.1151$, $\Theta_{h_0}^{-1}(0.95) = 0.2047$, where $\Theta_{h_0}^{-1}$ is the inverse uncertainty distribution of $\mathcal{N}(0.0448, 0.0985)$. Then, 
\begin{equation*}
\begin{split}
W_1 = \bigl\{ \left( \hat{\varepsilon}_1, \hat{\varepsilon}_2, \dots, \hat{\varepsilon}_{500} \right) \mid 
&\text{there are at least } 25 \text{ of indexes } i\text{'s with } 1 \leq i \leq 500 \\
&\hspace{-7em} \text{such that } \hat{\varepsilon}_i \sim \Omega_i \text{ with } \Omega_i^{-1}(0.95) < -0.1151, \text{ or } \Omega_i^{-1}(0.05) > 0.2047 \bigr\},
\end{split}
\end{equation*}
where $\Omega_i^{-1}$ is the inverse distribution of $\varepsilon_i$. Figure \ref{fig:simulationaplot3.3} shows the values of $\Omega_i^{-1}(0.05),\Omega_i^{-1}(0.95)$ for $i = 1, 2, \dots, 500$, with the boundaries of this test. Clearly, only four points fall into the rejection region $W_1$. Therefore, we fail to reject $H_0$. This shows that the model provides a good fit to the sample data.\\
\indent According to formula $(\ref{eq:12})$, given a new set of explanatory variables $(x_1$, $x_2$, $x_3)$= $(2.1$, $-1.5$, $0.4)$, we can obtain the expected value of the new responding variable $\hat{y}$:
\begin{equation*}
	\hat{\mu}=E[\hat{y}]=\hat{g}(\sum_{k=1}^{3}\hat{\beta}_kx_{k})+\hat{e}=-0.9167.
\end{equation*}
At the significance level of $\alpha=0.05$, the minimum value $a$ satisfying the inequality ($\ref{eq:13}$) is $a^{*}=0.0054$. Thus, the confidence interval for response variable $\hat{y}$ is $[\mu-a^{*}, \mu+a^{*}]=[-0.9221, -0.9113]$.


\begin{figure}[H]
	\centering
	\begin{subfigure}{\linewidth}
		\centering
		\setlength{\parskip}{0pt}
		\setlength{\baselineskip}{0pt}
		\caption{}  
		\includegraphics[width=0.8\textwidth,height=6.0cm]{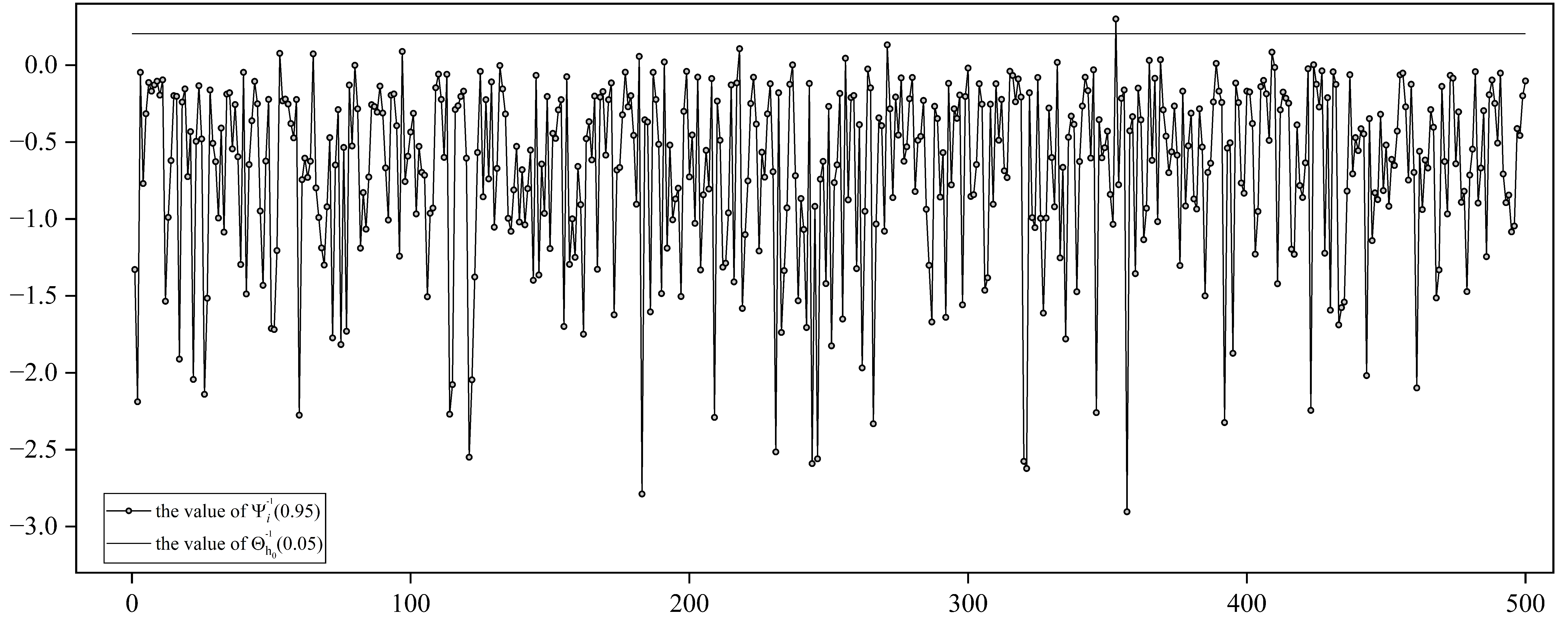} 
		\label{fig:simulationaplot3.1}
	\end{subfigure}
	\begin{subfigure}{\linewidth}
		\centering
		\caption{}  
		\includegraphics[width=0.8\textwidth,height=6.0cm]{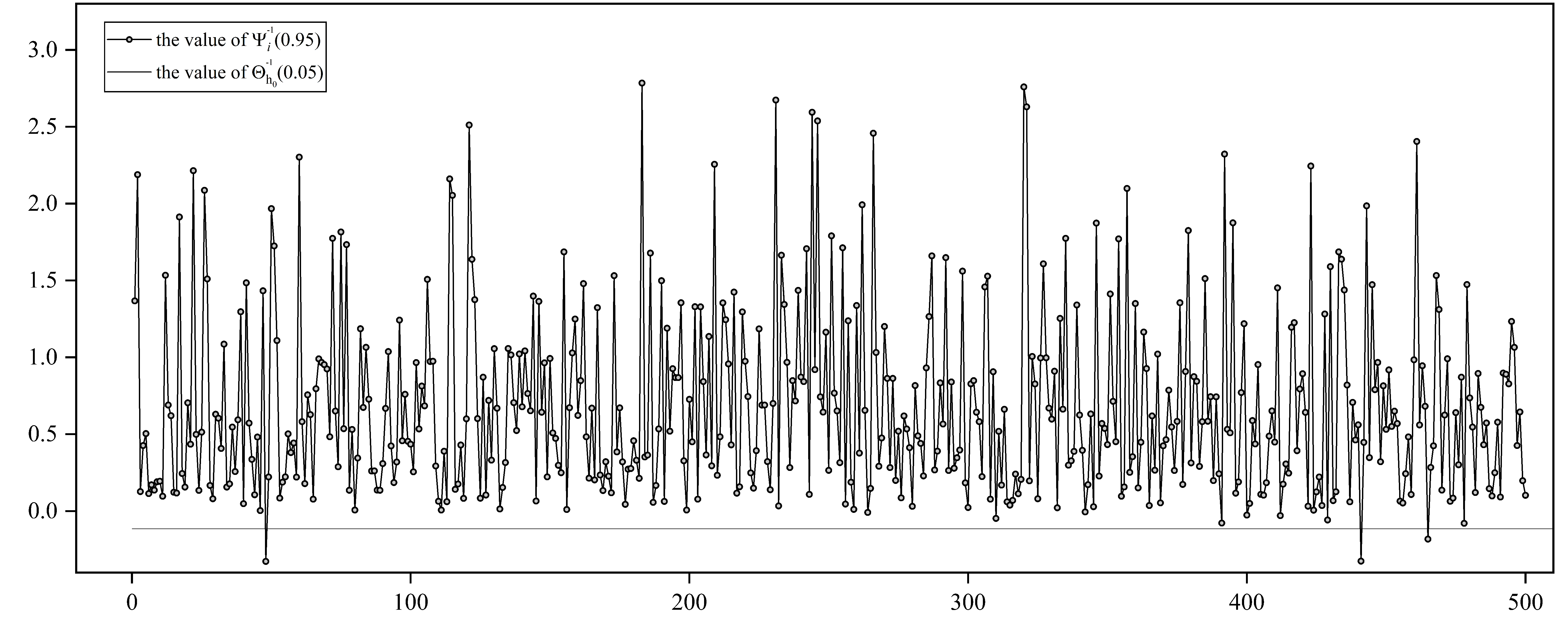} 
		\label{fig:simulationaplot3.2}
	\end{subfigure}
	
	\caption{Scatter plots of $\Omega_i^{-1}(0.05)$ and $\Omega_i^{-1}(0.95)$ for $i = 1, \dots, 500$ with the boundaries of the test $W_1$.}
	\label{fig:simulationaplot3.3}
\end{figure}
}
\end{exam}

\begin{exam}{\rm
Consider both the explanatory variables $x_1,\ldots,x_p$ and the response variable $y$ are imprecise.  In the Table \ref{tab:data_subset_2}, the data are generated from the USIU model as follows:
\begin{equation}
	\tilde{y}=g(\sum_{k=1}^{3}\beta_k\tilde{x}_{k})+\epsilon,
	\label{eq:24}
\end{equation}
where $\epsilon$ is an uncertain disturbance term following a linear uncertain distribution, $\tilde{x}_1$, $\tilde{x}_2$, $\tilde{x}_3$ are mutually independent, and we specify $\beta_1$=0.3, $\beta_2$=0.7, $\beta_3$=0.6481. The data in Table 2 comprises 50 observations.
As in the previous example, we assume the link function is the identity function, thereby obtaining the initial values of the single-index coefficients. 

Based on the fixed $\beta_k(k=1,2,3)$, the B-spline method in Section \ref{Sect:3.2} is employed to obtain the estimate of the unknown function $g$. Notably, cubic B-splines($l=3$) are commonly used to balance the smoothness of fitted curves and computational simplicity, and we follow this convention in this simulation. To arrange appropriate knots when the variable in the B-spline approximate function is imprecise, we suggest first calculating $E[\tilde{t}_i](i=1,\ldots,n)$. Then, the knots can be placed uniformly between the minimum and maximum of these expected values. This interval is updated as $\bm{\beta}$ is renewed. For the cubic B-splines adopted in this paper, we replicate each boundary knot four times to control the B-spline curve so that it starts and ends exactly at the boundaries. Let $E_{min}=\min\{E[\hat{t}_1],\dots,E[\hat{t}_{n}]\},E_{max}=\max\{E[\hat{t}_1],\dots,E[\hat{t}_{n}]\}$, the knots are placed as follows:\\
\[
\left[\bigl\{E_{min},E_{min},E_{min},E_{min}\bigr\},\;
E[\tilde{t}]_{\text{uniformly spaced}},\;
\bigl\{E_{max},E_{max},E_{max},E_{max}\bigr\}\right].
\]
To ensure the approximating function satisfies the monotonicity constraints in Theorem \ref{thm:2}, we impose a monotonicity penalty and verify the monotonicity of the estimated function by plotting its derivative. Then, we use the L-BFGS algorithm in R to implement the two-step iterative procedure described in Section {\ref{Sect:3.2}}, optimizing the model parameters and B-spline coefficients. The optimal number of basis functions is set to $m=8$ via the cross-validation criterion given in Eq.({\ref{eq:19}}). Finally, the expected value of the residual sum of squares is calculated as $\sum_{i=1}^{50} E[(\tilde{y}_i - \bm{\delta}^T(\bm{\hat{\beta}}^T \boldsymbol{\tilde{x}_i})\bm{\hat{b}})^2]=12.411$, and we obtain the optimal single-index coefficients as $\hat{\beta_1}=0.137$, $\hat{\beta_2}=0.709$, $\hat{\beta_3}=0.692$. The estimated B-spline coefficients are given by
\begin{equation*}
\hat{\bm{b}} = \left( 0.046,\ 0.021,\ 0.248,\ 1.296,\ 0.646,\ 0.325,\ 0.124,\ -0.021 \right).
\end{equation*}
The B-spline estimator of the link function is $\bm{\delta}^T(\tilde{t})\hat{\bm{b}}$, and its curve is presented in Figure \ref{fig:simulationbplot1}.\\
\indent Based on the above model, the expectation and variance of the disturbance term are obtained according to Theorem {\ref{thm:3}} as: $\hat{e}=0.009$, $\hat{\sigma}^2=0.038$. We further assume the disturbance follows the normal uncertain distribution $\mathcal{N}(0.009, 0.194)$, an uncertain hypothesis test can also verify the validity of this assumption, and we write it as follows:
\begin{equation*}
	H_0: e = 0.009 \text{ and } \sigma = 0.194 \quad \text{v.s.} \quad H_1: e \neq 0.009 \text{ or } \sigma \neq 0.194.
\end{equation*}
Specify the significance level $\alpha=0.05$, it has $\Theta_{h_0}^{-1}(0.05) = -0.383$, $\Theta_{h_0}^{-1}(0.95) = 0.401$, where $\Theta_{h_0}^{-1}$ is the inverse uncertainty distribution of $\mathcal{N}(-0.383, 0.401)$. Then the test is 
\begin{equation*}
	\begin{split}
		W_2 = \bigl\{ \left( \hat{\epsilon}_1, \hat{\epsilon}_2, \dots, \hat{\epsilon}_{500} \right) \mid 
		&\text{there are at least } 2 \text{ of indexes } i\text{'s with } 1 \leq i \leq 50 \\
		&\hspace{-7em} \text{such that } \hat{\epsilon}_i \sim \Omega_i \text{ with } \Omega_i^{-1}(0.95) < -0.383, \text{ or } \Omega_i^{-1}(0.05) > 0.401 \bigr\},
	\end{split}
\end{equation*}
where $\Omega_i^{-1}$ is the inverse distribution of $\epsilon_i$. Figure \ref{fig:simulationbplot2} shows the values of $\Omega_i^{-1}(0.05),\Omega_i^{-1}(0.95)$ for $i = 1, 2, \dots, 50$, with the boundaries of the test. It is observed that none of the points fall within the rejection region $W_2$. Hence, the null hypothesis cannot be rejected, demonstrating that the above model fits the data well.

\begin{sidewaystable}[htbp]
	\captionsetup{
		justification=raggedright, 
		singlelinecheck=off,        
		skip=0.5pt
	}
	\caption{Data generated from model (24)}
	\label{tab:data_subset_2}
	
	\renewcommand{\arraystretch}{1.7} 
	\scriptsize{\normalsize} 
	
	\begin{tabular*}{\textwidth}{@{\extracolsep{0pt}} 
			l @{\hspace{0.3cm}}                
			*{4}{>{\raggedright\arraybackslash}p{2.3cm}} @{\hspace{0.5cm}}  
			l @{\hspace{0.3cm}}                
			*{4}{>{\raggedright\arraybackslash}p{2.3cm}} @{}  
		}
		\toprule
		$i$ & $\tilde{x}_{1i}$ & $\tilde{x}_{2i}$ & $\tilde{x}_{3i}$ & $\tilde{y}_i$ 
		& $i$ & $\tilde{x}_{1i}$ & $\tilde{x}_{2i}$ & $\tilde{x}_{3i}$ & $\tilde{y}_i$ \\
		\midrule				
		
		1  & $\mathcal{L}(2.053,3.895)$ & $\mathcal{L}(-1.989,4.974)$ & $\mathcal{L}(-0.388,2.707)$ & $\mathcal{L}(1.213,1.216)$ & 
		26 & $\mathcal{L}(-0.661,1.128)$ & $\mathcal{L}(-4.816,-1.055)$ & $\mathcal{L}(-4.701,-4.225)$ & $\mathcal{L}(-1.372,-1.367)$ \\		
		
		2  & $\mathcal{L}(-1.818,4.316)$ & $\mathcal{L}(0.355,4.599)$ & $\mathcal{L}(-0.731,3.699)$ & $\mathcal{L}(1.253,1.259)$ & 
		27 & $\mathcal{L}(-4.064,-3.217)$ & $\mathcal{L}(-1.625,-0.842)$ & $\mathcal{L}(1.246,2.773)$ & $\mathcal{L}(-0.588,-0.569)$  \\
		
		3  & $\mathcal{L}(-2.425,2.863)$ & $\mathcal{L}(2.520,3.111)$ & $\mathcal{L}(-2.120,1.152)$ & $\mathcal{L}(1.040,1.050)$ &
		28 & $\mathcal{L}(-0.217,-2.234)$ & $\mathcal{L}(-3.722,0.533)$ & $\mathcal{L}(-0.289,2.884)$ & $\mathcal{L}(-0.831,-0.232)$ \\
		
		4  & $\mathcal{L}(1.405,2.312)$ & $\mathcal{L}(-1.879,4.977)$ & $\mathcal{L}(0.856,2.744)$ & $\mathcal{L}(1.220,1.237)$ &
		29 & $\mathcal{L}(-3.955,2.882)$ & $\mathcal{L}(-3.916,-1.321)$ & $\mathcal{L}(-3.568,-0.864)$ & $\mathcal{L}(-1.292,-1.281)$ \\
		
		5  & $\mathcal{L}(-0.678,3.865)$ & $\mathcal{L}(1.787,4.774)$ & $\mathcal{L}(0.331,3.069)$ & $\mathcal{L}(1.317,1.320)$ &
		30 & $\mathcal{L}(1.531,3.869)$ & $\mathcal{L}-4.392,-1.988)$ & $\mathcal{L}(-4.210,2.967)$ & $\mathcal{L}(-1.074,-1.065)$ \\
		
		6  & $\mathcal{L}(-1.689,-0.863)$ & $\mathcal{L}(-0.647,3.321)$ & $\mathcal{L}(-0.431,2.298)$ & $\mathcal{L}(0.856,0.861)$ &
		31 & $\mathcal{L}(-3.581,1.091)$ & $\mathcal{L}(-1.376,-0.645)$ & $\mathcal{L}(-3.822,-1.839)$ & $\mathcal{L}(-1.244,-1.237)$ \\
		
		7  & $\mathcal{L}(2.206,4.847)$ & $\mathcal{L}(0.218,0.839)$ & $\mathcal{L}(-2.009,0.427)$ & $\mathcal{L}(0.734,0.748)$ &
		32 & $\mathcal{L}(3.508,4.842)$ & $\mathcal{L}(-4.754,-1.356)$ & $\mathcal{L}(-4.420,0.099)$ & $\mathcal{L}(-1.166,-1.150)$ \\
		
		8 & $\mathcal{L}(3.206,4.545)$ & $\mathcal{L}(-1.815,4.638)$ & $\mathcal{L}(2.442,3.860)$ & $\mathcal{L}(1.332,1.341)$ &
		33  & $\mathcal{L}(-4.115,-0.578)$ & $\mathcal{L}(-4.034,-2.951)$ & $\mathcal{L}(-3.446,3.267)$ & $\mathcal{L}(-1.275,-1.262)$ \\
		
		9 & $\mathcal{L}(-0.025,2.222)$ & $\mathcal{L}(4.171,4.652)$ & $\mathcal{L}(0.991,3.806)$ & $\mathcal{L}(1.366,1.378)$ &
		34 & $\mathcal{L}(-2.688,4.399)$ & $\mathcal{L}(-4.007,1.257)$ & $\mathcal{L}(-0.944,2.997)$ & $\mathcal{L}(-0.048,-0.032)$ \\
		
		10  & $\mathcal{L}(-1.100,0.215)$ & $\mathcal{L}(-1.978,3.575)$ & $\mathcal{L}(-1.280,1.073)$ & $\mathcal{L}(0.337,0.352)$ &
		35  & $\mathcal{L}(-4.277,-2.838)$ & $\mathcal{L}(-3.378,1.596)$ & $\mathcal{L}(-2.443,-0.014)$ & $\mathcal{L}(-1.198,-1.178)$  \\
		
		11  & $\mathcal{L}(0.646,0.968)$ & $\mathcal{L}(-3.293,2.464)$ & $\mathcal{L}(2.885,4.403)$ & $\mathcal{L}(1.159,1.168)$ &
		36  &$\mathcal{L}(0.603,1.556)$ & $\mathcal{L}(-1.713,1.128)$ & $\mathcal{L}(-3.328,-2.739)$ & $\mathcal{L}(-1.078,-1.072)$ \\
		
		12  & $\mathcal{L}(2.162,3.803)$ & $\mathcal{L}(0.987,1.318)$ & $\mathcal{L}(4.191,4.887)$ & $\mathcal{L}(1.357,1.361)$ &
		37  & $\mathcal{L}(1.699,2.294)$ & $\mathcal{L}(-4.530,1.211)$ & $\mathcal{L}(-4.264,3.367)$ & $\mathcal{L}(-0.708,-0.706)$ \\
		
		13 & $\mathcal{L}(-2.762,3.398)$ & $\mathcal{L}(2.338,4.616)$ & $\mathcal{L}(-4.975,-0.865)$ & $\mathcal{L}(0.557,0.577)$ &
		38 & $\mathcal{L}(-1.481,0.583)$ & $\mathcal{L}(-4.932,-4.066)$ & $\mathcal{L}(-2.789,-1.178)$ & $\mathcal{L}(-1.364,-1.346)$ \\
		
		14 & $\mathcal{L}(-3.477,-0.118)$ & $\mathcal{L}(-0.967,-0.714)$ & $\mathcal{L}(1.195,4.686)$ & $\mathcal{L}(0.655,0.666)$  &
		39 & $\mathcal{L}(1.623,4.689)$ & $\mathcal{L}(-3.517,1.541)$ & $\mathcal{L}(-3.172,-1.783)$ & $\mathcal{L}(-0.935,-0.933)$ \\
		
		15  & $\mathcal{L}(-2.428,4.745)$ & $\mathcal{L}(-2.222,4.993)$ & $\mathcal{L}(2.353,3.267)$ & $\mathcal{L}(1.262,1.263)$ &
		40 & $\mathcal{L}(0.765,4.921)$ & $\mathcal{L}(-4.801,-3.559)$ & $\mathcal{L}(-4.344,-0.873)$ & $\mathcal{L}(-1.320,-1.301)$ \\
		
		16  & $\mathcal{L}(-3.629,-2.279)$ & $\mathcal{L}(3.195,3.966)$ & $\mathcal{L}(2.323,3.528)$ & $\mathcal{L}(1.287,1.300)$ &
		41 & $\mathcal{L}(-2.741,-1.317)$ & $\mathcal{L}(-4.812,-3.895)$ & $\mathcal{L}(-4.907,0.493)$ & $\mathcal{L}(-1.385,-1.369)$ \\
		
		17 & $\mathcal{L}(-4.363,-2.470)$ & $\mathcal{L}(0.208,1.727)$ & $\mathcal{L}(-0.006,4.989)$ & $\mathcal{L}(0.894,0.911)$ &
		42 & $\mathcal{L}(-3.434,4.494)$ & $\mathcal{L}(-0.115,0.687)$ & $\mathcal{L}(-4.266,-1.670)$ & $\mathcal{L}(-1.007,-0.996)$ \\
		
		18 & $\mathcal{L}(3.538,4.950)$ & $\mathcal{L}(1.887,1.951)$ & $\mathcal{L}(0.259,3.061)$ & $\mathcal{L}(1.299,1.314)$ &
		43 	& $\mathcal{L}(-0.341,3.028)$ & $\mathcal{L}(-3.608,3.183)$ & $\mathcal{L}(-3.924,-3.694)$ & $\mathcal{L}(-1.152,-1.142)$ \\
		
		19  & $\mathcal{L}(2.040,3.695)$ & $\mathcal{L}(1.985,4.910)$ & $\mathcal{L}(-1.429,3.802)$ & $\mathcal{L}(1.323,1.334)$ &
		44  & $\mathcal{L}(-3.629,-2.260)$ & $\mathcal{L}(0.426,1.097)$ & $\mathcal{L}(-2.694,0.603)$ & $\mathcal{L}(-0.805,-0.794)$ \\
		
		20 & $\mathcal{L}(1.601,1.974)$ & $\mathcal{L}(-4.013,2.038)$ & $\mathcal{L}(2.151,3.227)$ & $\mathcal{L}(1.005,1.012)$ &
		45 & $\mathcal{L}(-4.556,1.912)$ & $\mathcal{L}(-4.008,-3.832)$ & $\mathcal{L}(-4.511,-3.624)$ & $\mathcal{L}(-1.401,-1.399)$  \\
		
		21 & $\mathcal{L}(-4.729,-1.857)$ & $\mathcal{L}(1.665,2.769)$ & $\mathcal{L}(-0.617,-0.463)$ & $\mathcal{L}(0.204,0.217)$ &
		46 & $\mathcal{L}(-2.556,-0.639)$ & $\mathcal{L}(-0.504,1.062)$ & $\mathcal{L}(-3.110,2.478)$ & $\mathcal{L}(-0.457,-0.452)$ \\
		
		22  & $\mathcal{L}(0.623,3.911)$ & $\mathcal{L}(1.057,1.980)$ & $\mathcal{L}(-0.373,1.916)$ & $\mathcal{L}(1.144,1.159)$ &
		47 & $\mathcal{L}(-4.330,3.665)$ & $\mathcal{L}(-4.848,-4.071)$ & $\mathcal{L}(-3.332,2.395)$ & $\mathcal{L}(-1.297,-1.292)$ \\
		
		23  & $\mathcal{L}(-0.259,3.901)$ & $\mathcal{L}(4.626,4.880)$ & $\mathcal{L}(2.981,3.652)$ & $\mathcal{L}(1.404,1.409)$ &
		48 & $\mathcal{L}(-3.753,-3.448)$ & $\mathcal{L}(-0.741,-0.494)$ & $\mathcal{L}(-4.199,-3.812)$ & $\mathcal{L}(-1.339,-1.326)$ \\
		
		24 & $\mathcal{L}(0.924,2.065)$ & $\mathcal{L}(-2.728,2.628)$ & $\mathcal{L}(0.473,4.090)$ & $\mathcal{L}(1.075,1.094)$ &
		49 & $\mathcal{L}(-3.347,-2.873)$ & $\mathcal{L}(-3.600,-3.008)$ & $\mathcal{L}(-3.101,-0.684)$ & $\mathcal{L}(-1.357,-1.344)$ \\
		
		25 & $\mathcal{L}(-1.406,-1.273)$ & $\mathcal{L}(1.754,4.805)$ & $\mathcal{L}(3.824,4.151)$ & $\mathcal{L}(1.342,1.360)$ &
		50 & $\mathcal{L}(-0.711,1.230)$ & $\mathcal{L}(-3.087,-2.429)$ & $\mathcal{L}(-3.629,0.821)$ & $\mathcal{L}(-1.230,-1.218)$ \\
		\bottomrule
	\end{tabular*}
	\normalsize
\end{sidewaystable}

Based on the Eq.(\ref{eq:21}), given a new set of uncertain variables $\tilde{x}_1\sim\mathcal{L}(-2, -1)$, $\tilde{x}_2\sim\mathcal{L}(0, 2)$, $\tilde{x}_3\sim\mathcal{L}(2, 3)$, the expected value of new response $\hat{y}$ is
\begin{equation*}
\hat{\mu}=E[\hat{y}]=\int_0^1\bm{\delta}^T(2.245\alpha+1.11)\hat{\bm{b}} d\alpha+\hat{e}=0.902.
\end{equation*}
At the significance level of $\alpha=0.05$, the minimum value $a$ satisfying the inequality ($\ref{eq:22}$) is $a^{*}=0.460$. Thus, the confidence interval for response variable $\hat{y}$ is $[\mu-a^{*}, \mu+a^{*}]=[0.442, 1.362]$.

\begin{figure}[] 
	\centering
	\includegraphics[width=0.5\linewidth]{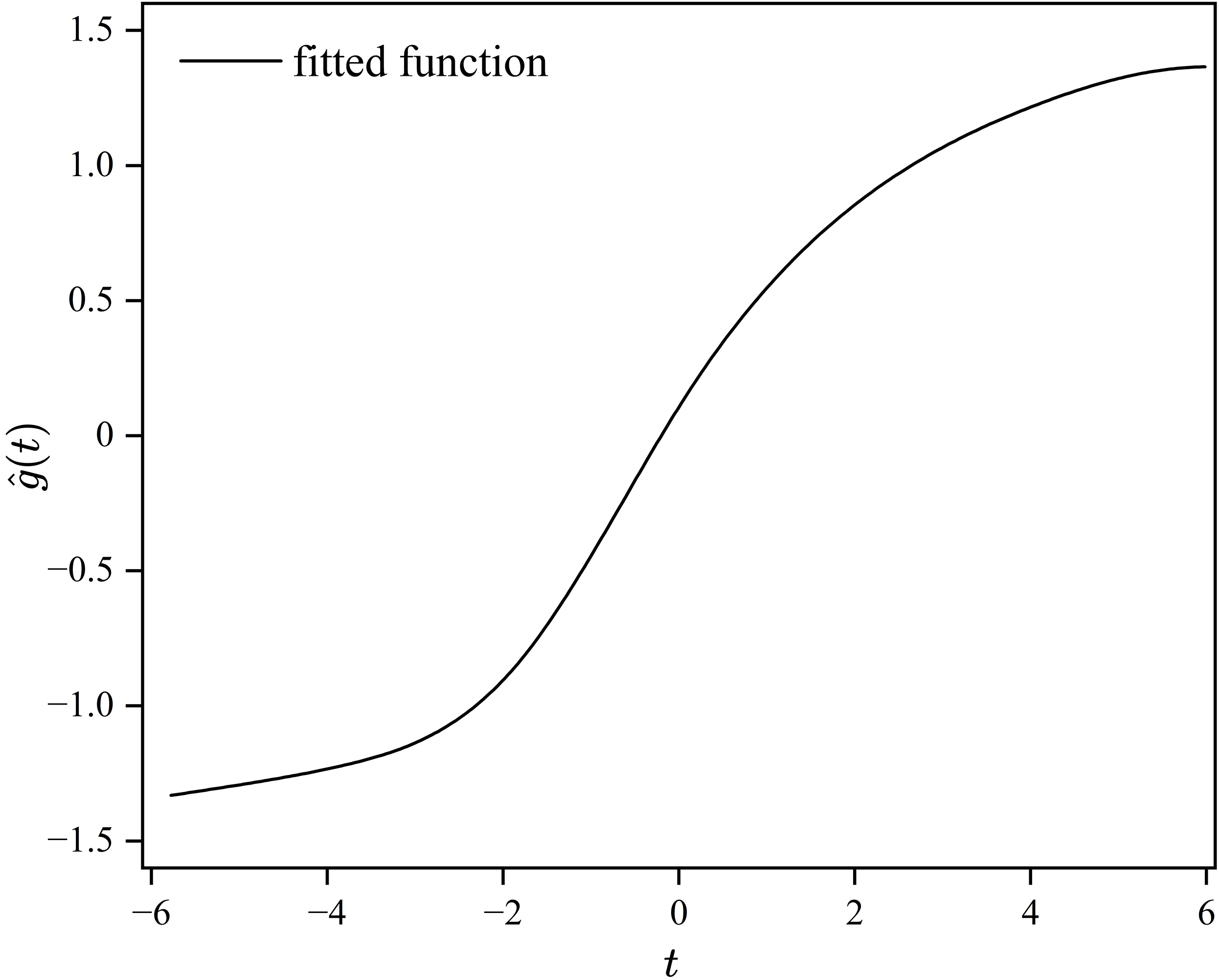}
	\caption{ The curves of estimated link function in the USIU model}
	\label{fig:simulationbplot1}
\end{figure}

\begin{figure}[H]
	\centering
	\begin{subfigure}{0.48\textwidth}
		\centering
		\caption{}
		\includegraphics[width=1\textwidth]{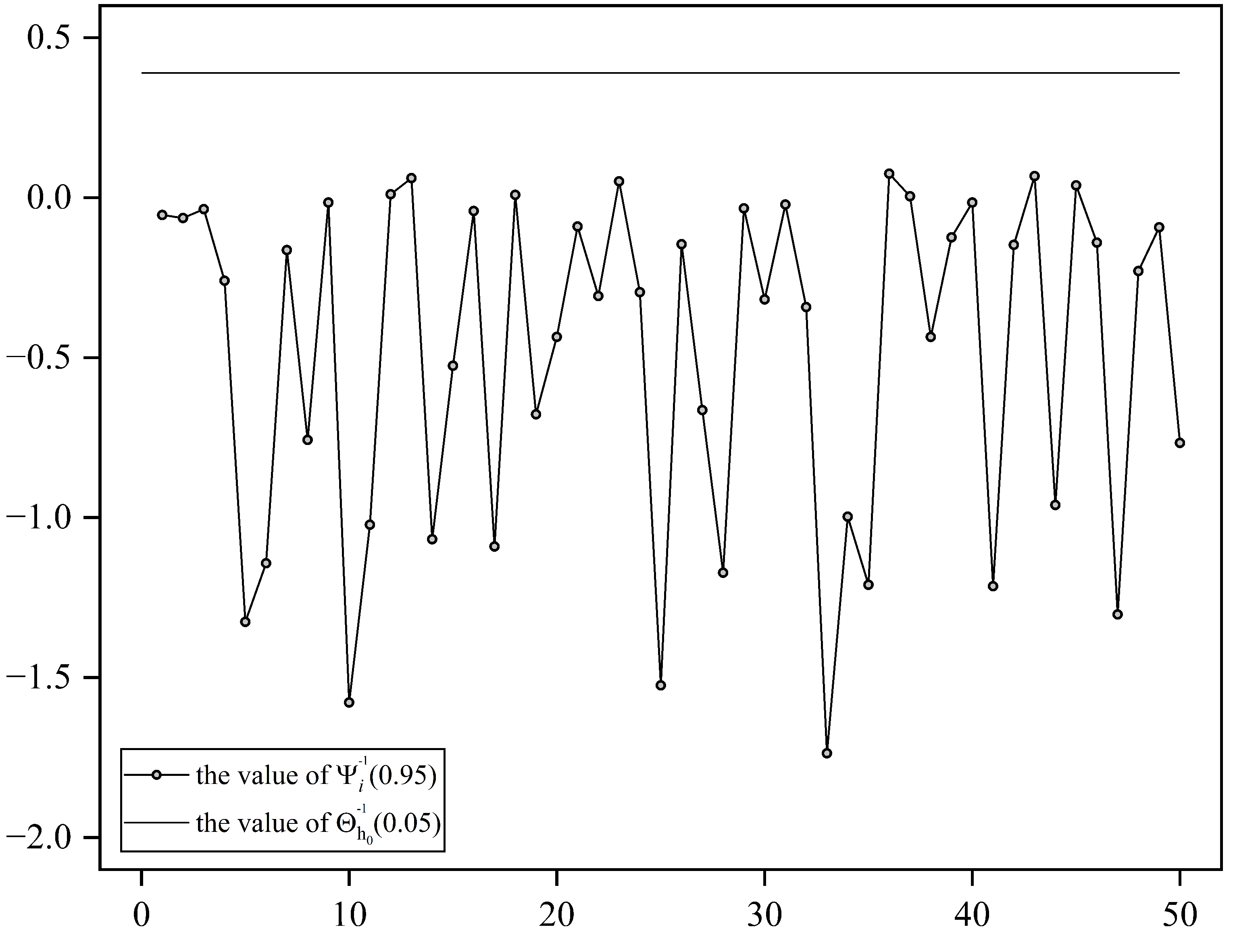} 
		\label{fig:simulationbplot2.1}
	\end{subfigure}
	\hfill  
	\begin{subfigure}{0.48\textwidth}
		\centering
		\caption{}
		\includegraphics[width=1\textwidth]{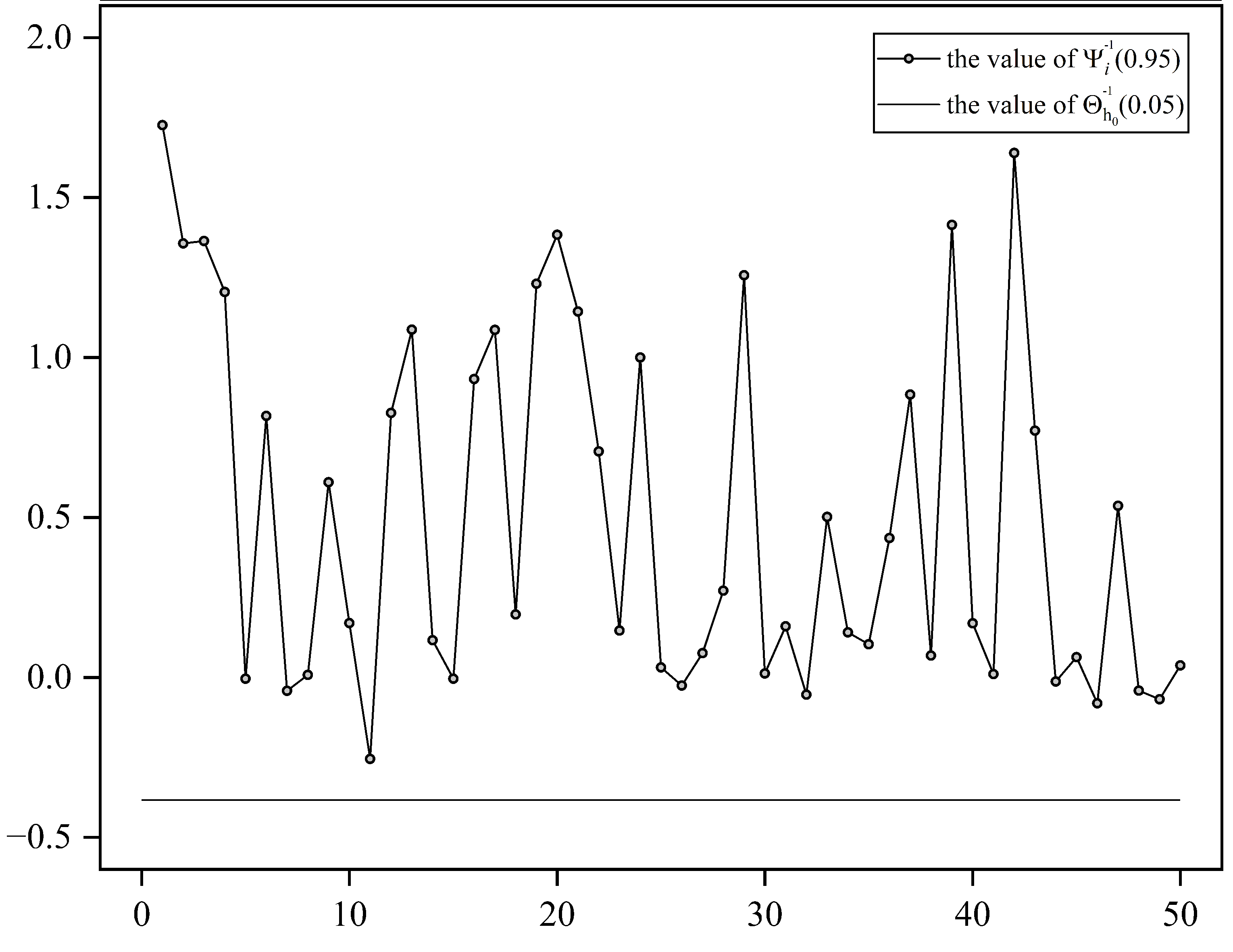} 
		\label{fig:simulationbplot2.2}
	\end{subfigure}
	
	\caption{Scatter plots of $\Omega_i^{-1}(0.05),\Omega_i^{-1}(0.95)$ for $i = 1, \dots, 50$ with the boundaries of the test $W_2$.}
	\label{fig:simulationbplot2}
\end{figure}
}
\end{exam}

\section{An application}\label{sec6}
In this section, a weather dataset of Lagos, Nigeria, is adopted to illustrate the practical applicability of the proposed model. The data are collected from the Kaggle platform (https://www.kaggle.com/). To eliminate interference from long-term climate trends and ensure data uniformity, the study period is set from January 1, 2023, to July 3, 2024, and the dataset contains 535 observations after outlier removal.  
In this case, the USIC model is employed:
\begin{equation}
\tilde{y}=g(\beta_1x_1+\beta_2x_2+\beta_3x_3+\beta_4x_4+\beta_5x_5)+\epsilon.
\label{eq:25}
\end{equation}
In the model (\ref{eq:25}), the response variable $y$ denotes the daily temperature range, and the explanatory variables $x_i (i=1,\ldots,5)$ consist of precipitation, wind speed, humidity, sea-level air pressure, and cloud coverage. The detailed definitions of all variables, together with examples of original and standardized data, are presented in Table \ref{tab:data_subset_3}.
Based on samples $(\tilde{y}_i,x_{1i},x_{2i},x_{3i},x_{4i},x_{5i})$ for $i=1,\ldots,535$, preliminary estimates of the index coefficients are obtained by specifying the initial fitting form of the unknown link function $g$. Taking the estimates as the initial values of $\bm{\beta}$, following the two-step iterative step in Sect.\ref{Sect:3.1}, we derive the final estimators as: $\beta_1=0.1770$, $\beta_2=0.0381 $, $\beta_3=0.8952$, $\beta_4=0.3438$, $\beta_5=0.2183$. In this iteration, the derivative kernel for the uncertain standard normal distribution in Equation (\ref{eq:9}) is adopted. The minimum cross-validation error $CV_{min}=0.5929$ is achieved at bandwidth $h=0.2026$ using Algorithm \ref{alg:fibonacci_bandwidth}, with the corresponding standard error as $0.0398$. To ensure sufficient smoothness and stability of the fitted curve, we adopt a one-standard-error rule (\citeyear{breiman1984cart}, \citeyear{hastie2009elements}) and select the largest bandwidth such that its cross-validation error is within one standard error of the minimum. The resulting modified bandwidth is $h^*=0.5461$, with the cross-validation curve shown in Figure \ref{fig:application1} (a). The final fitting curve of the link function is presented in Figure \ref{fig:application1} (b). 

\begin{table}[H]
	\captionsetup{
		justification=raggedright, 
		singlelinecheck=off        
	}
	\caption{Variable definitions with original and standardized data illustrations.}
	\label{tab:data_subset_3}
	\small  
	\begin{tabular*}{\textwidth}{@{\extracolsep{\fill}} c l l c c @{}}
		\toprule
		Notation & Variable & Definition & Original & Standardized \\
		\midrule 		
		$a_y$  & Tempmin  & Minimum air temperature (℃) & 20.8 &  -  \\	
		$b_y$  & Tempmax  & Maximum air temperature (℃) & 35   & -   \\	
		$x_1$  & Precip   & Total precipitation (mm)     & 13   & 0.62   \\	
		$x_2$  & Humidity & Average relative humidity (\%)& 86.9   & 0.83 \\	
		$x_3$  & Windspeed & Average wind speed (m/s)& 18.9   & -0.45\\	
		$x_4$  & Sealevelpressure & Average sea-level pressure (hPa)& 1014.2   & 1.32 \\	
		$x_5$  & Cloudcover & Average cloud cover ratio (\%)& 59  & 0.45 \\	
		\bottomrule
	\end{tabular*}
\end{table}

 The single-index weights indicate that humidity ($\beta_3•$=0.8952) acts as the dominant factor governing temperature variations in Lagos. In tropical coastal regions, high humidity strongly regulates thermal conditions through dual mechanisms of evaporative cooling and longwave radiation trapping. Sealevel pressure ($\beta_4$=0.3438) and cloud cover ($\beta_5$=0.2831) serve as secondary regulating factors, indirectly affecting temperature by modulating atmospheric circulation and the surface radiation budget, respectively. Precipitation ($\beta_1$=0.1770) and windspeed ($\beta_2$=0.0381) exhibit limited contributions. Precipitation mainly occurs as short-term convective rainfall, while windspeed remains stable under persistent land–sea breezes, resulting in weak independent explanatory power for temperature dynamics.
The estimated link function curve presents a distinct unimodal pattern. As shown in Figure \ref{fig:application_all}, when the single-index value is lower than the approximate threshold $z=-1.3$, it corresponds to the hot and clear conditions of the dry season, where shortwave radiation heating dominates, and temperature rises gradually with increasing $z$. When $z$ exceeds the threshold, the climate shifts to a wet-season convective regime. The joint effects of high humidity, extensive cloud cover, and rainfall induce evaporative cooling, cloud shading, and cold-pool effects, shifting the surface energy balance from radiation to latent heat dominance and leading to a rapid temperature decline.

 In summary, the model captures the nonlinear relationship between air temperature and five other meteorological variables. Moreover, it confirms that temperature regulation over Lagos switches between two distinct dynamic regimes across the threshold. Specifically, the system shifts from a radiation-dominated state to one controlled by latent heat processes, while the key driving factors remain unchanged.


\begin{figure}[]
	\centering
	\begin{subfigure}{0.48\textwidth}
		\centering
		\caption{}
		\includegraphics[width=1\textwidth]{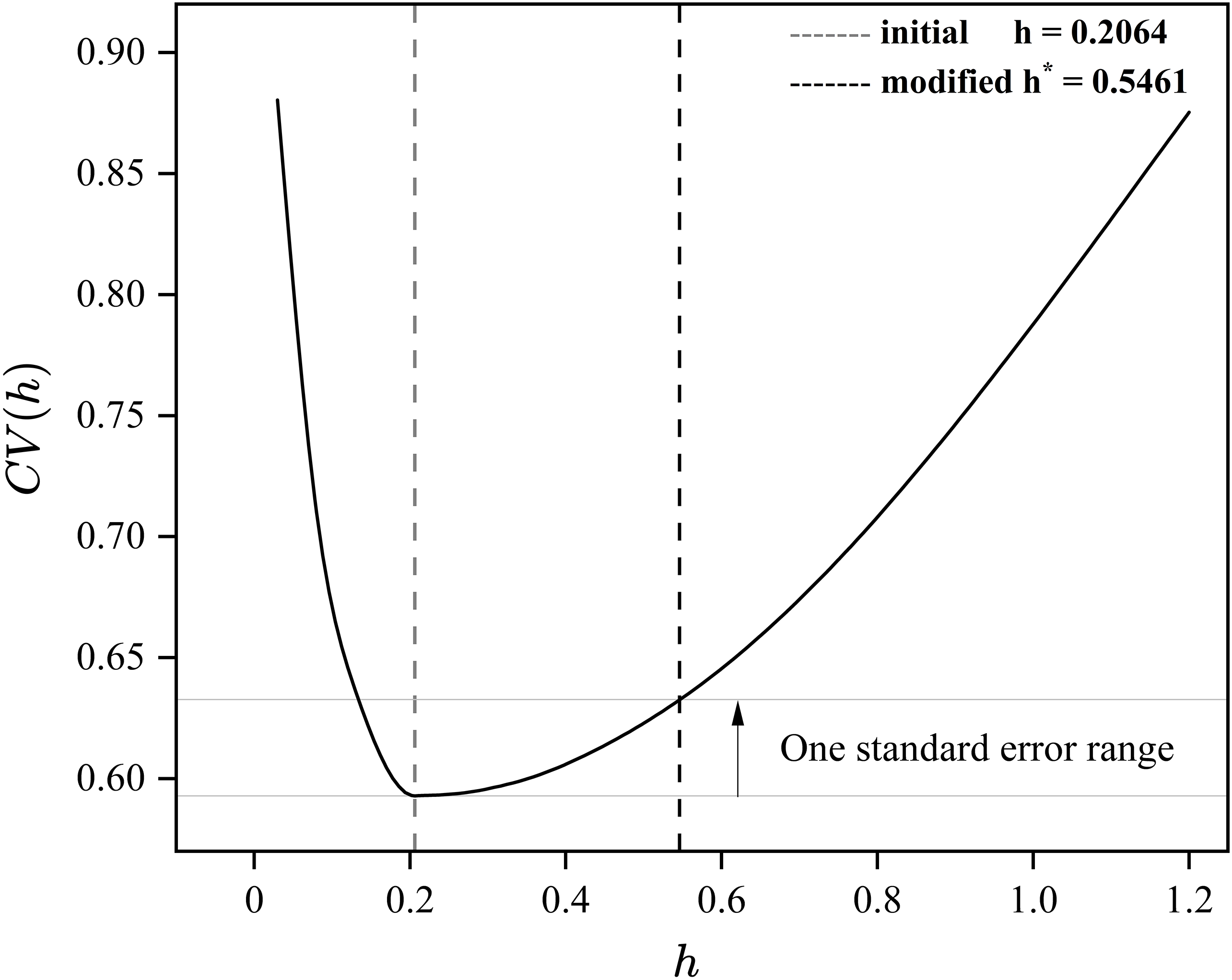} 
		\label{fig:application1.1}
	\end{subfigure}
	\hfill  
	\begin{subfigure}{0.48\textwidth}
		\centering
		\caption{}
		\includegraphics[width=1\textwidth]{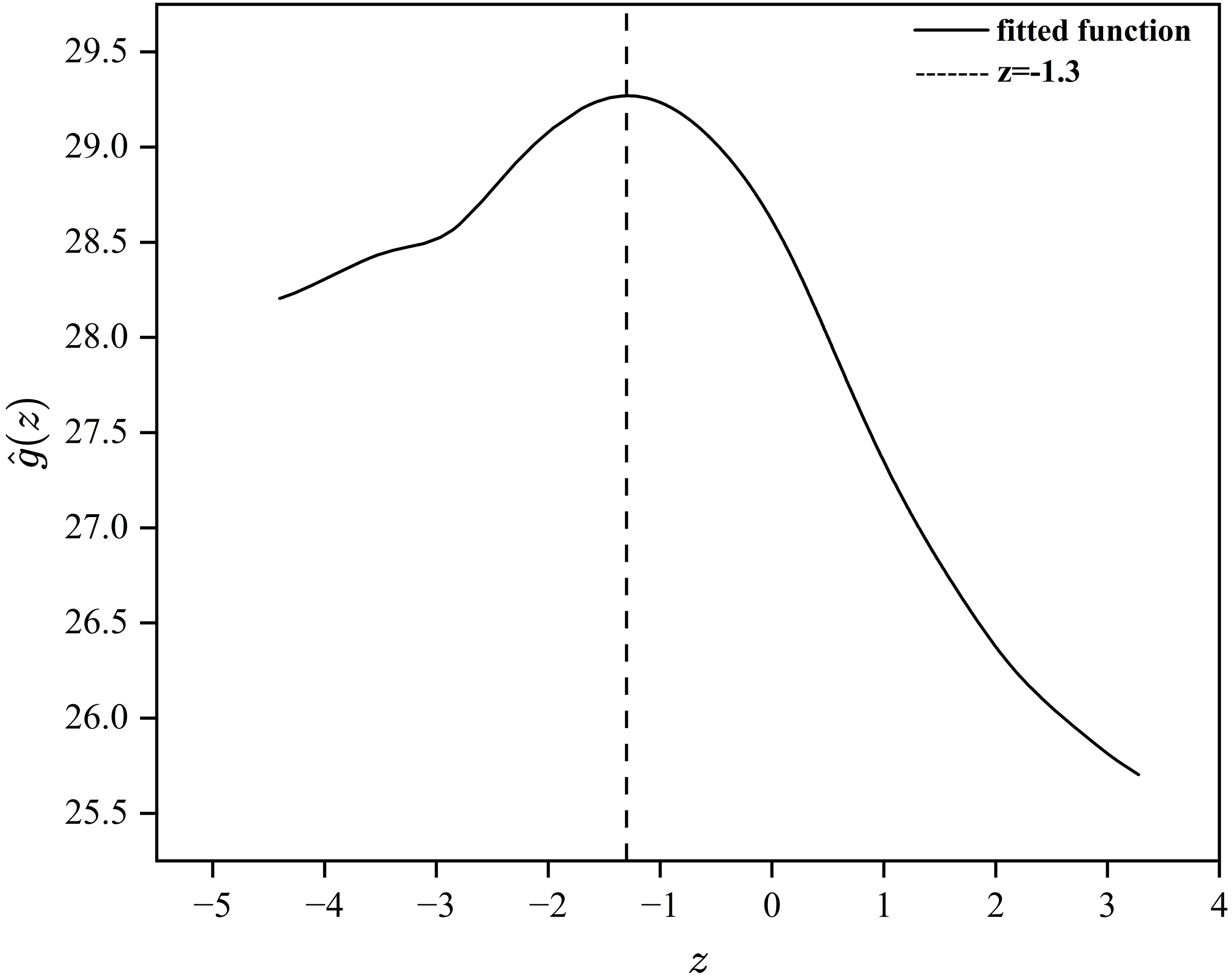} 
		\label{fig:application1.2}
	\end{subfigure}
	
	\caption{(a) Cross-validation curve for bandwidth selection. (b) Plot of the estimated link function.}
	\label{fig:application1}
\end{figure}

\begin{figure}[htbp]
	\centering
	\def\figwidth{0.48\textwidth}
	
	\begin{subfigure}{\figwidth}
    	\centering
    	\includegraphics[width=\linewidth]{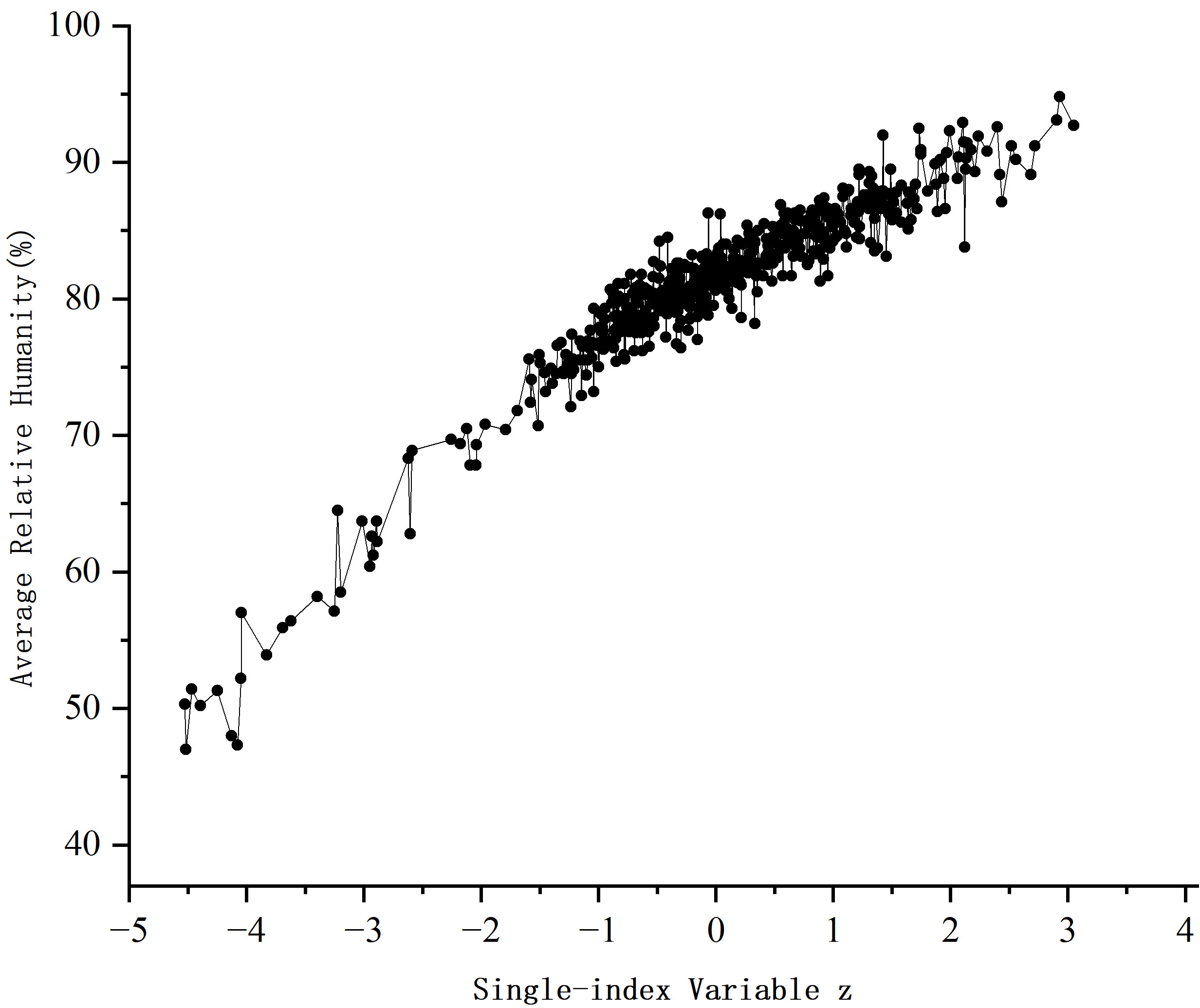} 

    	\caption{}
	\end{subfigure}
	
	\vspace{12pt}
	
	\begin{subfigure}{\figwidth}
		\centering
		\includegraphics[width=\linewidth]{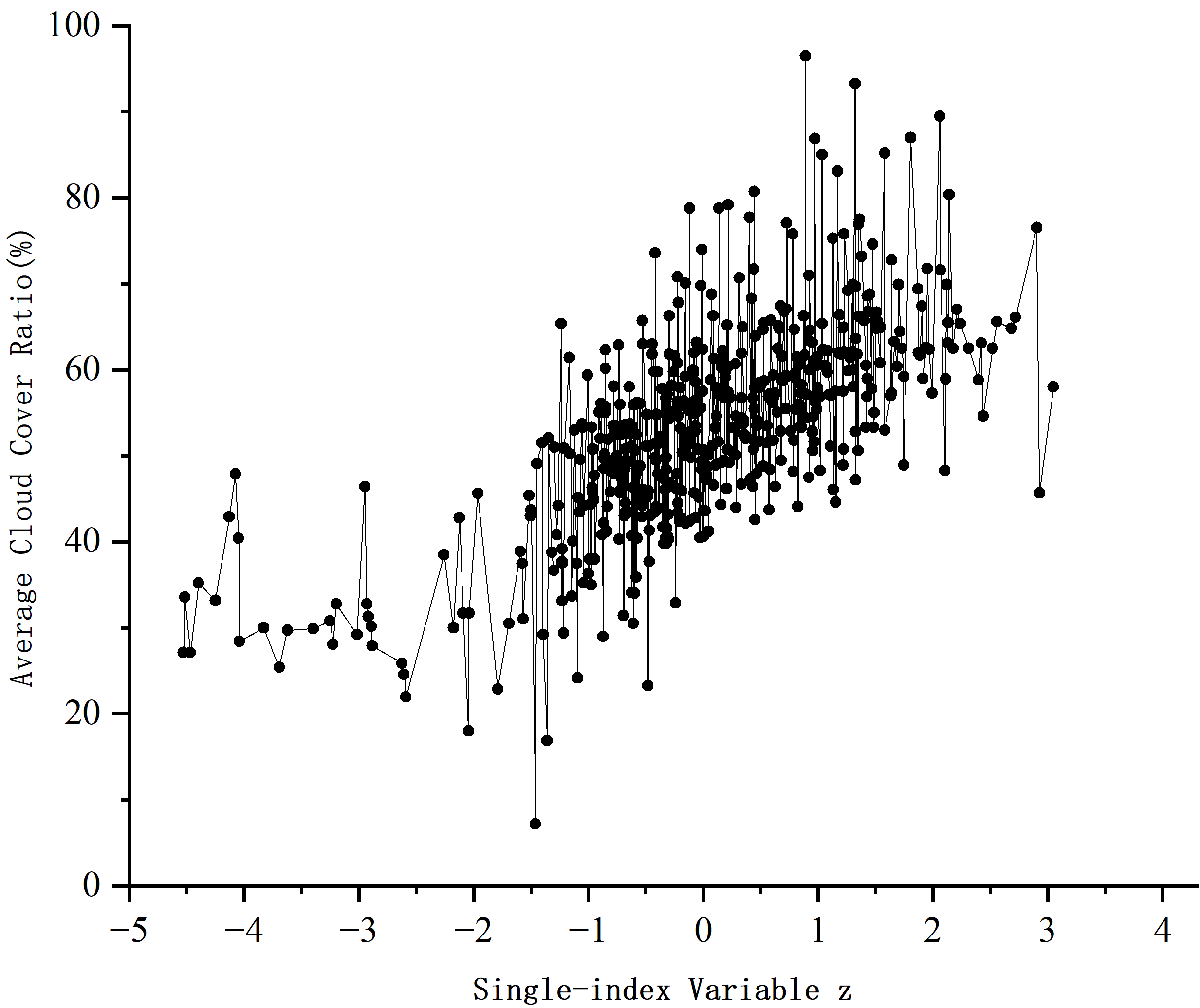}
		\caption{}
	\end{subfigure}
	\hfill
	\begin{subfigure}{\figwidth}
		\centering
		\includegraphics[width=\linewidth]{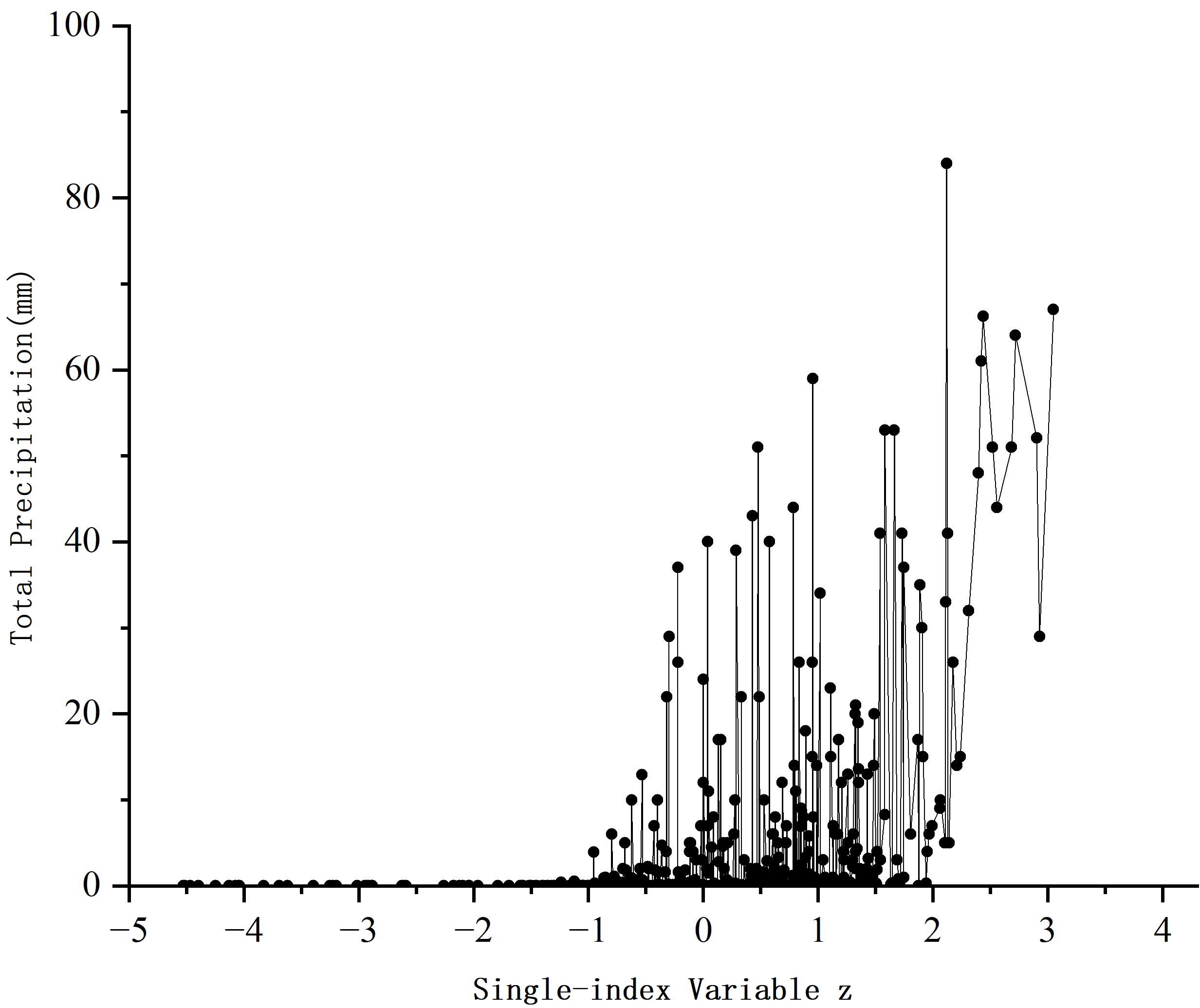}
		\caption{}
	\end{subfigure}
	
	\vspace{12pt}
	
	\begin{subfigure}{\figwidth}
		\centering
		\includegraphics[width=\linewidth]{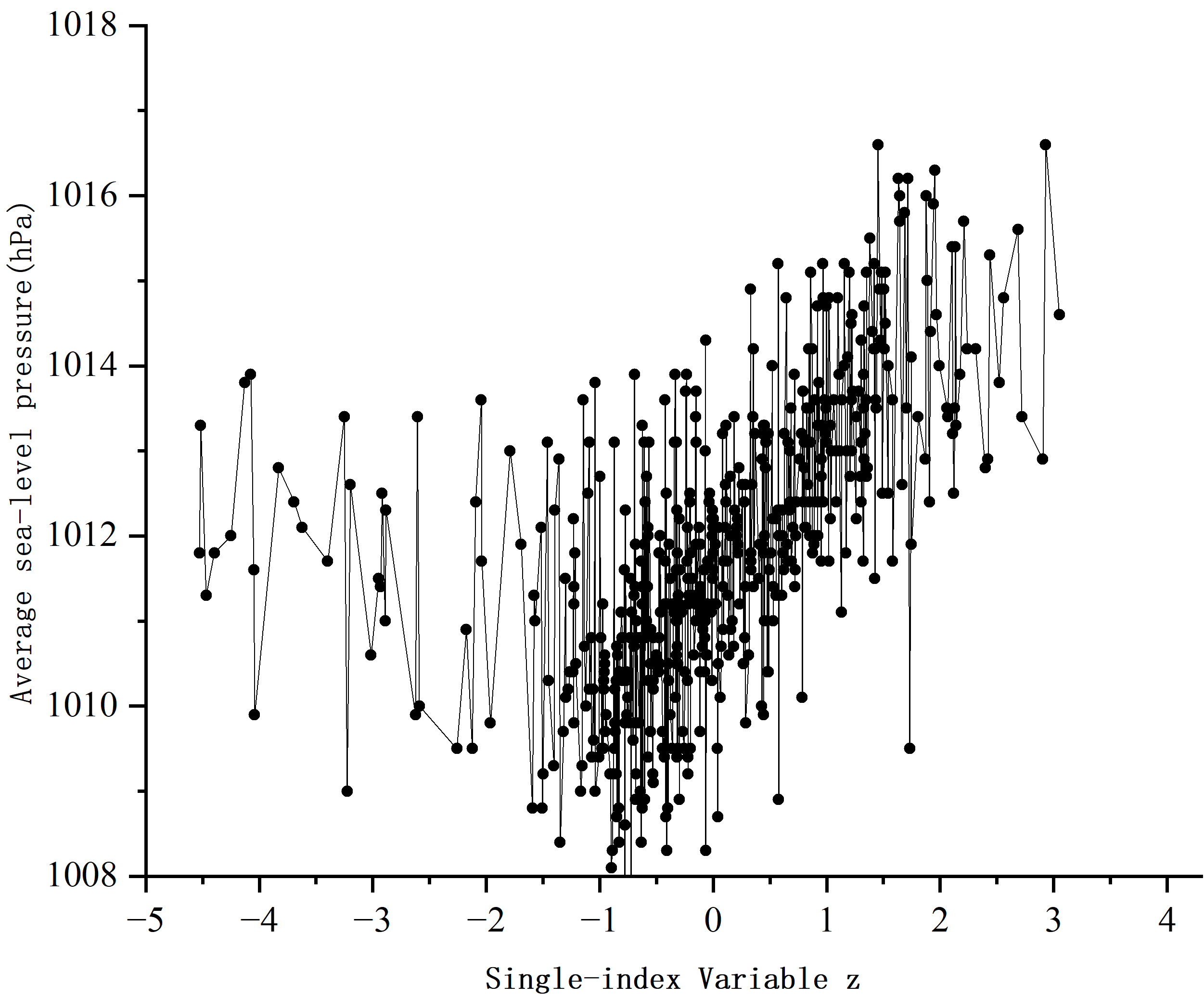}
		\caption{}
	\end{subfigure}
	\hfill
	\begin{subfigure}{\figwidth}
		\centering
		\includegraphics[width=\linewidth]{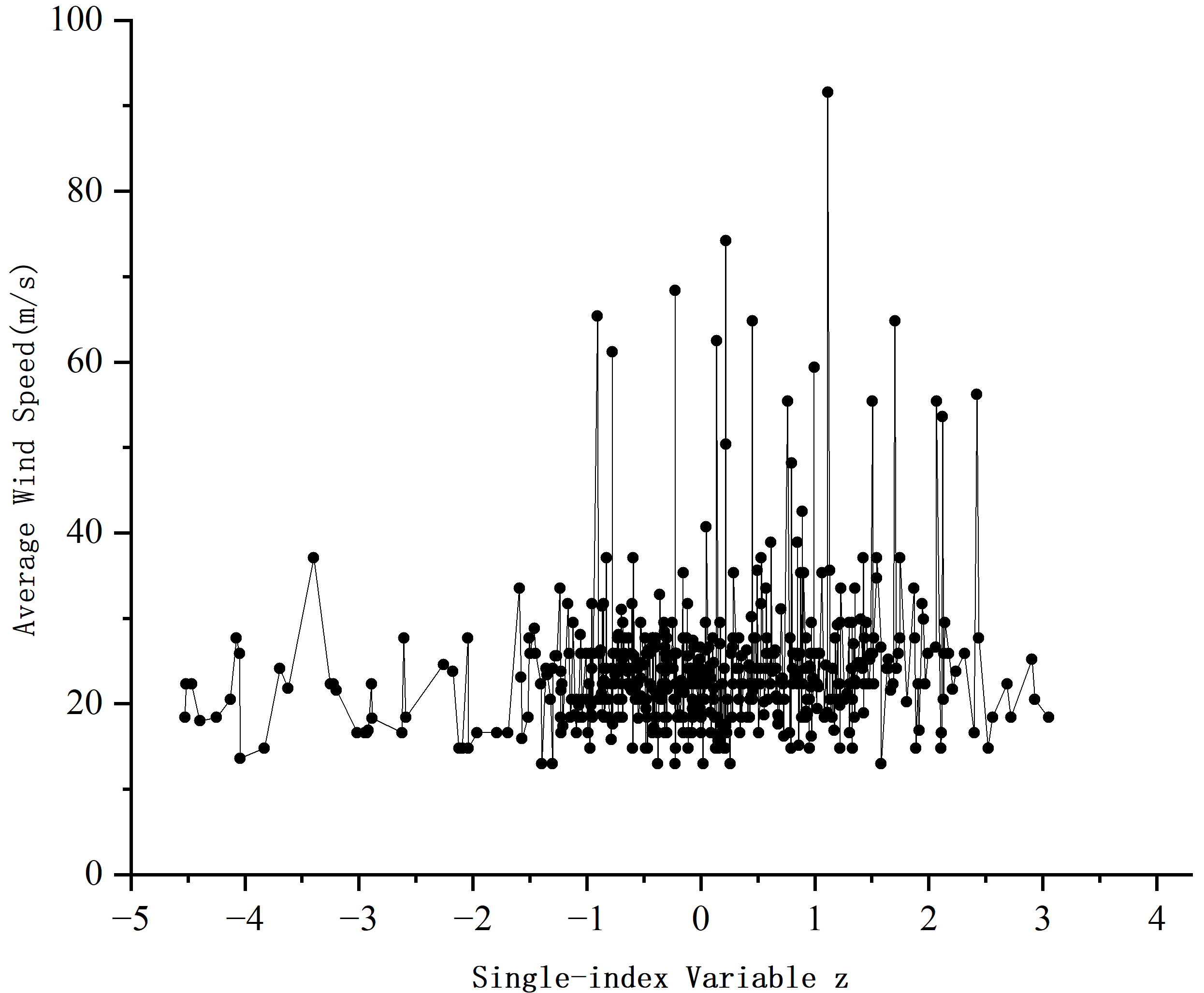}
		\caption{}
	\end{subfigure}
	
	\vspace{5pt}
	\caption{Relationships between key meteorological variables and the single-index value in Lagos.}
	\label{fig:application_all}
\end{figure}

\subsection{Residual analysis}
Next, we conduct residual analysis on the fitted model. By Eq.(\ref{eq:12}), we calculate the expectation and variance of the disturbance term: $\hat{e}=-0.0509$, $\hat{\sigma}^2=0.9765$. Following the procedure of residual analysis, we assume that the disturbance term follows a normal uncertain distribution $\mathcal{N}(-0.0509, 0.9882)$. A hypothesis test is further conducted to verify the rationality of this assumption.The hypotheses are formulated as:
\begin{equation*}
	H_0: e =-0.0509 \text{ and } \sigma = 0.9882 \quad \text{v.s.} \quad H_1: e \neq -0.0509 \text{ or } \sigma \neq 0.9882.
\end{equation*}
Let the significance level $\alpha=0.05$, we have $\Theta_{h_0}^{-1}(0.05) = -1.6551$, $\Theta_{h_0}^{-1}(0.95) = 1.5533$, where $\Theta_{h_0}^{-1}$ is the inverse uncertainty distribution of $\mathcal{N}(-0.0509, 0.9882)$. The rejection region is  
\begin{equation*}
	\begin{split}
		W_3 = \bigl\{ \left( \hat{\epsilon}_1, \hat{\epsilon}_2, \dots, \hat{\epsilon}_{535} \right) \mid 
		&\text{there are at least } 27 \text{ of indexes } i\text{'s with } 1 \leq i \leq 535 \\
		&\hspace{-7em} \text{such that } \hat{\epsilon}_i \sim \Omega_i \text{ with } \Omega_i^{-1}(0.95) < -1.6551, \text{ or } \Omega_i^{-1}(0.05) > 1.5533 \bigr\},
	\end{split}
\end{equation*}
where $\Omega_i^{-1}$ represents the inverse uncertainty distribution of $\epsilon_i$. 
Figure \ref{fig:application3} plots $\Omega_i^{-1}(0.05)$ and $\Omega_i^{-1}(0.95)$ for $i = 1,2,\dots,535$, along with the critical boundaries of the test. It can be observed that no sample point falls into the rejection region $W_3$. Consequently, the null hypothesis $H_0$ cannot be rejected. This result demonstrates that the established model fits the sample data well.\\
\indent According to formula $(\ref{eq:13})$, given a new set of explanatory variables $(x_1, x_2, x_3, x_4, x_5)=(0.5, 0, 1, -0.5, 0.2)$, we derive the expected value of the new response variable $\hat{y}$ as follows:
\begin{equation*}
	\hat{\mu}=\hat{g}(\hat{\beta}_1x_{1}+\hat{\beta}_2x_{2}+\hat{\beta}_3x_{3}+\hat{\beta}_4x_{4}+\hat{\beta}_5x_{5})+\hat{e}=28.1022.
\end{equation*}
Given the significance level $\alpha=0.1$, the minimum value satisfying the inequality ($\ref{eq:14}$) is obtained as $a^{*}=0.1387$. Accordingly, the confidence interval for the response variable $\hat{y}$ is calculated as $[\mu-a^{*}, \mu+a^{*}]=[27.9635, 28.2409]$.

\begin{figure}[H]
	\centering
	\begin{subfigure}{\linewidth}
		\centering
		\setlength{\parskip}{0pt}
		\setlength{\baselineskip}{0pt}
		\caption{}  
		\includegraphics[width=0.8\textwidth,height=6.0cm]{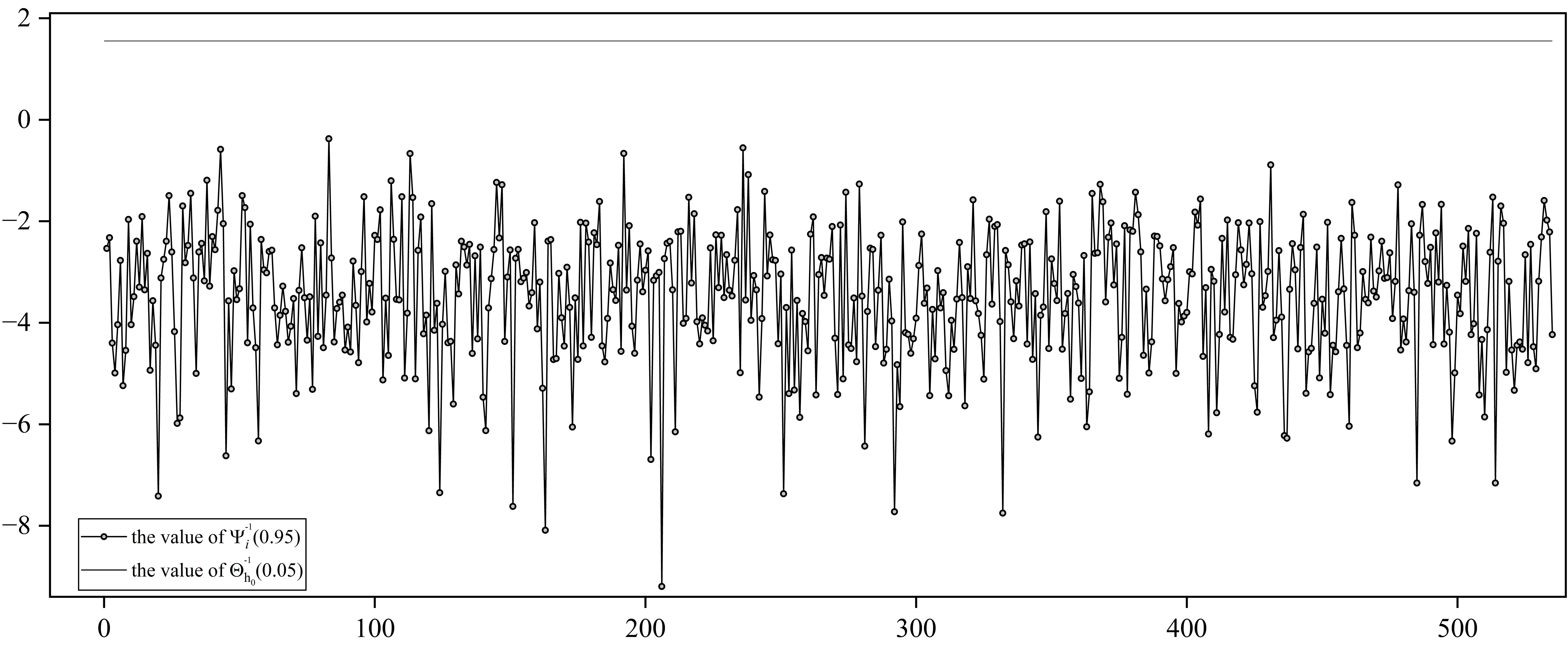} 
		\label{fig:application3.1}
	\end{subfigure}
	\begin{subfigure}{\linewidth}
		\centering
		\caption{}  
		\includegraphics[width=0.8\textwidth,height=6.0cm]{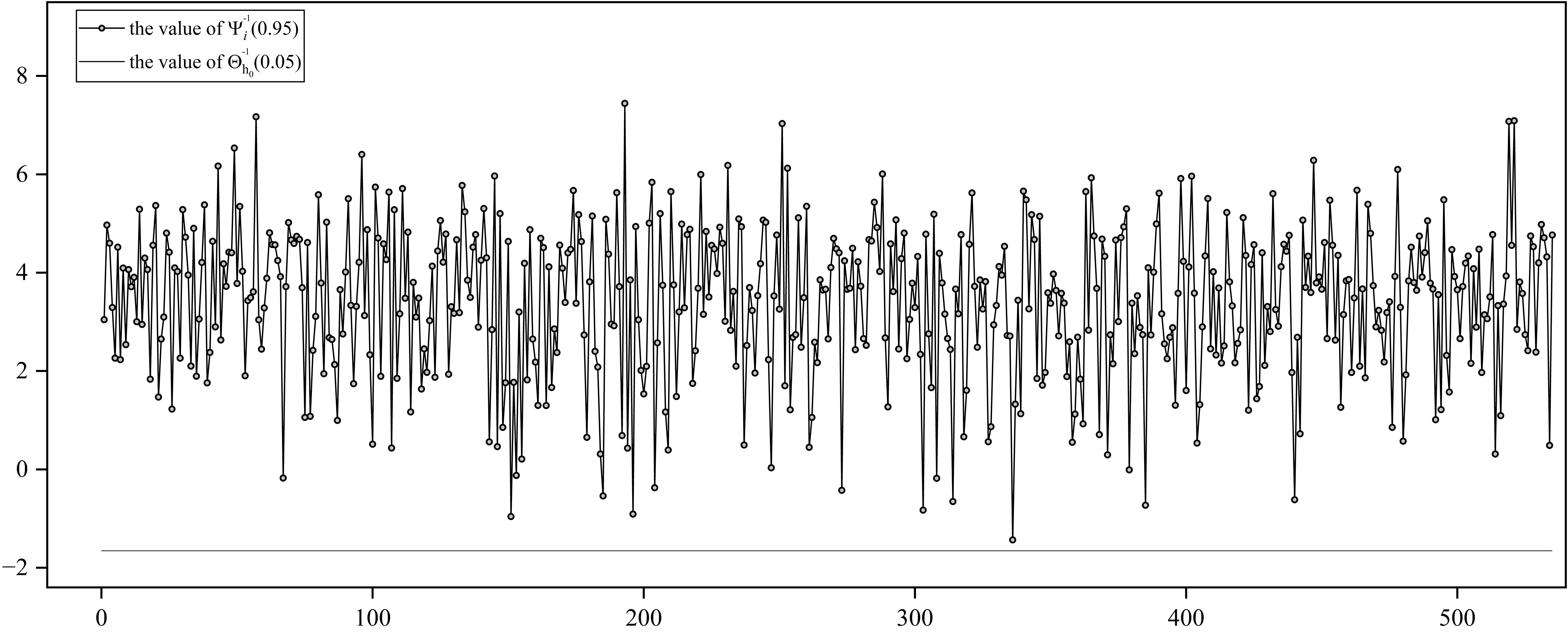} 
		\label{fig:application3.2}
	\end{subfigure}
	
	\caption{Scatter plots of $\Omega_i^{-1}(0.05)$ and $\Omega_i^{-1}(0.95)$ for $i = 1, \dots, 500$ with the boundaries of the test $W_3$.}
	\label{fig:application3}
\end{figure}

\subsection{Significance test}
\indent Since the coefficient $\beta_2=0.0381$ is close to zero, we conduct an uncertain significance test by Ye and Liu (\citeyear{YeLiu2023}) to examine whether $\beta_2$ is significant in the model, as presented below:\\
\begin{equation*}
H_0: \beta_2 = 0  \quad \text{v.s.} \quad H_1: \beta_2 \neq 0.
\end{equation*}
\indent In the significance test, we need to perform two hypothesis tests under null hypothesis($H_0$) and alternative hypothesis($H_1$). Following the decision criterion, $\beta_2$ is considered insignificant when $(\epsilon_{0,1}, \epsilon_{0,2},\ldots, \epsilon_{0,n})$ $\notin$ $W_{(0)}$ and $(\epsilon_{1,1}, \epsilon_{1,2},\ldots, \epsilon_{1,n})$ $\in$ $W_{(1)}$. Otherwise, $\beta_2$ is significant. The complete testing procedure is illustrated in Steps 1 to 7.\\
\textbf{Step 1}: ($H_0$: Parameter estimation) Under the null hypothesis $H_0: \beta_2=0$, we obtain the estimators of coefficients via formula (\ref{eq:7}) as follows:
\begin{equation*}
	\bm{\hat{\beta}}=(\beta_1,\beta_2,\beta_3,\beta_4,\beta_5)^T=(0.2084, 0, 0.8695, 0.3912, 0.2180)^T
\end{equation*}
\textbf{Step 2}: ($H_0$: Residual analysis) From equation (\ref{eq:16}), the residual sequence $(\varepsilon_{0,1}, \ldots, \varepsilon_{0,535})$ is obtained. Then, we calculate the expected value and variace of the residual sequence as follows:
\begin{equation*}
	\hat{e}_0=\frac{1}{535} \sum_{i=1}^{535}\hat{\varepsilon}_{0,i}=0.49, \quad \hat{\sigma}^2_0=\frac{1}{535} \sum_{i=1}^{535} (\hat{\varepsilon}_{0,i} - \hat{e}_0)^2=1.04^2.
\end{equation*}
\textbf{Step 3}: ($H_0$: Hypothesis test) Based on the results above, we assume that disturbance term follows a normal uncertain distribution $\mathcal{N}(0.49,1.04)$. At the significance level of $\alpha= 0.05$, we have $\Theta_{h_0}^{-1}(0.05) = -1.1934$, $\Theta_{h_0}^{-1}(0.95) = 2.1832$. Then, the rejection region can be expressed as:
\begin{equation*}
    \begin{split}
        W_{(0)} = \bigl\{ \left( \hat{\epsilon}_{0,1}, \hat{\epsilon}_{0,2}, \dots, \hat{\epsilon}_{0,535} \right) \mid 
        &\text{there are at least } 27 \text{ of indexes } i\text{'s with } 1 \leq i \leq 535 \\
        &\hspace{-8.5em} \text{such that } \hat{\epsilon}_{0,i} \sim \Omega_{0, i} \text{ with } \Omega_{0, i}^{-1}(0.95) < -1.1934, \text{ or } \Omega_{0, i}^{-1}(0.05) > 2.1832 \bigr\},
    \end{split}
\end{equation*}
where $\Omega_{0, i}^{-1}$ represents the inverse uncertainty distribution of $\varepsilon_{0,i}$. As shown in figure \ref{fig:significance}, no residual falls in rejection region. Thus, we have $(\hat{\epsilon}_{0,1}, \hat{\epsilon}_{0,2}, \dots, \hat{\epsilon}_{0,535})$ $\notin$ $W_{(0)}$.\\
\textbf{Step 4-6}: (The test for $H_1$) The procedure under the alternative hypothesis $H_1: \beta_2 \neq 0$ follows Section 7.2 exactly and is omitted here. As a direct result,  $(\hat{\epsilon}_{0,1}, \hat{\epsilon}_{0,2}, \dots, \hat{\epsilon}_{0,535})$ $\notin$ $W_{(1)}$.\\
\textbf{Step 7}: (Determine) Based on the decision criterion, we determine that $\beta_2$ is significant.

\subsection{Model comparison}
In this subsection, we perform a comparative analysis to demonstrate the superiority of the single-index model with unspecified function $g$ in data fitting. For consistency, the least squares method is used to fit all models to the real-world data in this section. Based on the distribution characteristics of the real data, four parametric forms of $g$ are considered: identity, quadratic, logarithmic and exponential functions. The different model forms and fit results are presented in Table \ref{tab:model_comparison}.\\

\begin{table}[h]
	\centering
	\caption{Comparison of different link function models}
	\label{tab:model_comparison}
	\begin{tabular}{ccccccc}  
		\toprule
		$g$ & $\beta_1$ & $\beta_2$ & $\beta_3$ & $\beta_4$ & $\beta_5$ & Variance \\
		\midrule   
		$z$&-0.4114&-0.0516&-0.2537&-0.7907&-0.2535 & 1.6030 \\
		$z^2$& -0.0156&-0.0018&-0.0081&-0.0276&-0.0087& 1.6251\\
		$Aexp(z)$& -0.0156&-0.0018&-0.0081&-0.0276&-0.0087& 1.2749\\
		$ln(1+z^2)$& $-1.1\times10^6$ & $-1.4\times10^5$ & $-7.2\times10^5$ & $-2.2\times10^6$ & $-7.1\times10^5$ & 8.6860 \\
		Unknown  & 0.1770&0.0381&0.8952&0.3438&0.2183& $\bm{0.9765}$\\
		\bottomrule
	\end{tabular}
\end{table}

\indent According to the residual variance criterion, lower values indicate a better model fit. As shown in Table \ref{tab:model_comparison}, $\sigma_{\text{single}}^2$ is minimized among the models with different specified link functions. Thus, the uncertain single-index model provides a better fit to the data, demonstrating greater flexibility in capturing complex relationships between variables.

\begin{figure}[H]
	\centering
	\begin{subfigure}{\linewidth}
		\centering
		\setlength{\parskip}{0pt}
		\setlength{\baselineskip}{0pt}
		\caption{}  
		\includegraphics[width=0.8\textwidth,height=6.0cm]{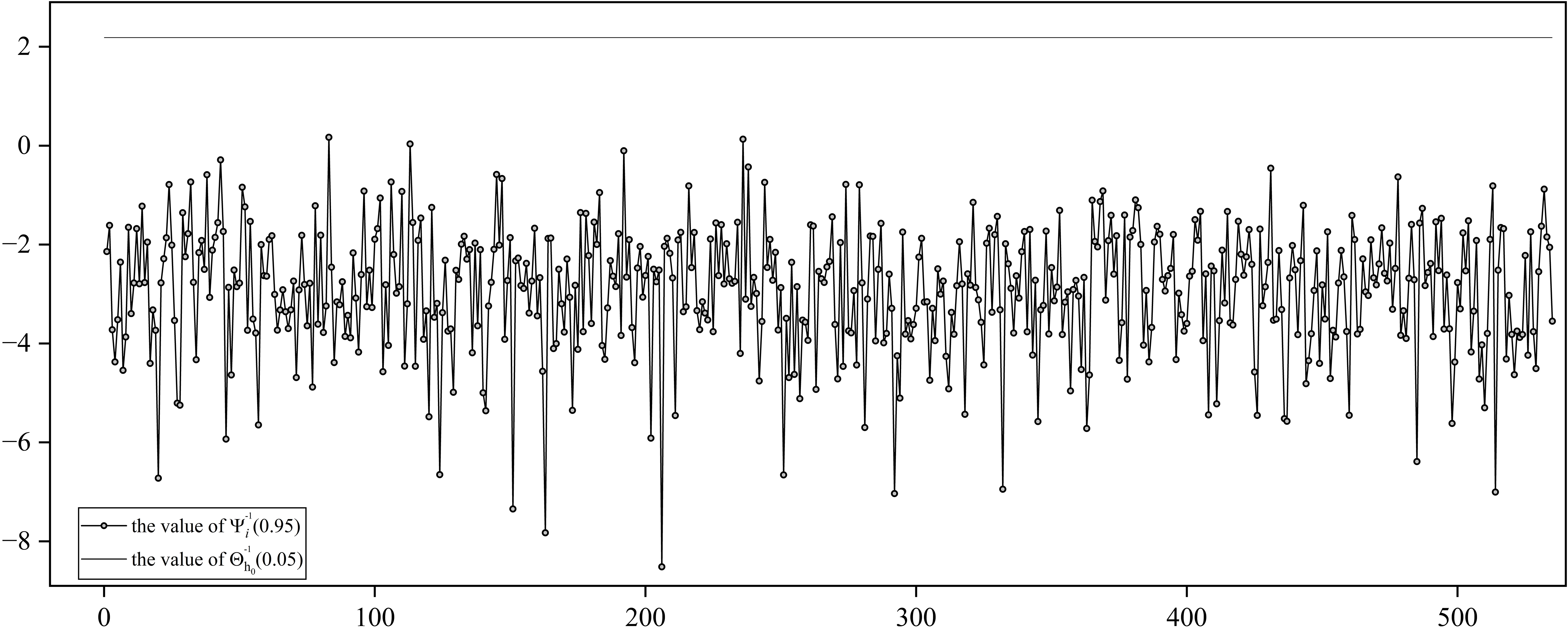} 
		\label{fig:significance1}
	\end{subfigure}
	\begin{subfigure}{\linewidth}
		\centering
		\caption{}  
		\includegraphics[width=0.8\textwidth,height=6.0cm]{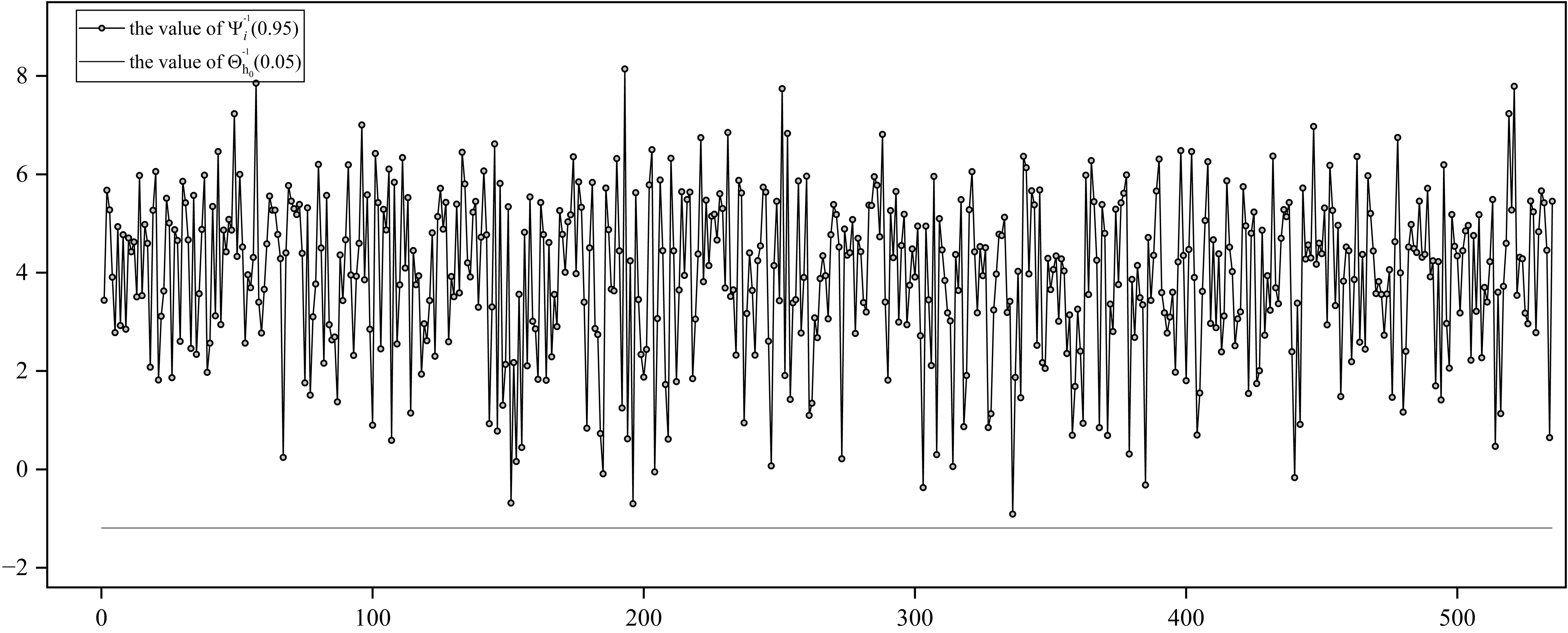} 
		\label{fig:significance2}
	\end{subfigure}
	
	\caption{Scatter plots of $\Omega_i^{-1}(0.05)$ and $\Omega_i^{-1}(0.95)$ for $i = 1, \dots, 500$ with the boundaries of the test $W_{(0)}$.}
	\label{fig:significance}
\end{figure}

\section{Conclusion}\label{sec7}
This paper proposes two types of uncertain single-index models (USIC and USIU) for imprecise data in uncertain environments to meet modeling requirements in high-dimensional predictor and complex variable-relationship scenarios. For the USIC model, an uncertain Nadaraya–Watson kernel estimator is introduced to approximate the unknown link function, and a two-step iterative estimation procedure is employed to estimate both the index coefficients and the unknown function. In the USIU model estimation, under the monotonicity constraint, the B-spline method is used to approximate the unknown function. This is because the N-W kernel estimator lacks an explicit expression; the B-spline method offers greater convenience for differentiation and facilitates the imposition of monotonicity. The parameters of the USIU model are also estimated using the same two-step iterative framework. To evaluate the effectiveness of the proposed estimation methods, two simulation studies are conducted, followed by residual analysis of the fitted models to assess their fitting performance. Simulation results demonstrate that both models can effectively capture the complex functional relationships inherent in imprecise data. Furthermore, a real-data analysis based on weather data is performed. To verify whether coefficients close to zero are truly insignificant, an uncertain significance test is applied. Finally, we compare the fitting performance with four uncertain regression models and validate the practical applicability of the proposed approach.

Future research should focus on statistical inference for uncertain partially linear single-index models and uncertain varying-coefficient single-index models, to enhance their performance in practical applications. Meanwhile, more efforts should be made to extend the estimation theory for uncertain single-index models and further compare the applicability of different estimation methods across various scenarios.

\section*{Funding}
The work was supported by the National Natural Science Foundation of China (12561047), the Xinjiang Talent Development Fund (XJRC-2025-KJ-PY-KJLJ-108), and the 2025 Central Guidance for Local Science and Technology Development Fund (ZYYD2025ZY20).

\end{document}